\newtheorem{theorem}{Theorem}
\newtheorem{lemma}{Lemma}
\newtheorem{corollary}{Corollary}
\newtheorem{definition}{Definition}
\newtheorem{example}{Example}
\begin{document} 
%
% paper title
% can use linebreaks \\ within to get better formatting as desired
\IEEEoverridecommandlockouts
\selectlanguage{english}

\title{%Chronos\thanks{*In Greek mythology, Chronos (Greek: \textgreek{Qr'onoc}, "time", also transliterated as Khronos or Latinized as Chronus) is the personification of Time in pre-Socratic philosophy and later literature, http://en.wikipedia.org/wiki/Chronos.}: 
%Scheduling Deadline-Constrained Flows in Energy-Efficient Data Center Networks
Energy-Efficient Flow Scheduling and Routing with Hard Deadlines in Data Center Networks
}
%Scheduling and Routing Deadline-Constrained Flows for Energy Minimization in Data Center Networks}

% author names and affiliations
% use a multiple column layout for up to three different
% affiliations
%\author{
%\and
%\IEEEauthorblockN{Homer Simpson}
%\IEEEauthorblockA{Twentieth Century Fox\\
%Springfield, USA\\
%Email: homer@thesimpsons.com}
%\and
%\IEEEauthorblockN{James Kirk\\ and Montgomery Scott}
%\IEEEauthorblockA{Starfleet Academy\\
%San Francisco, California 96678-2391\\
%Telephone: (800) 555--1212\\
%Fax: (888) 555--1212}
%}

% conference papers do not typically use \thanks and this command
% is locked out in conference mode. If really needed, such as for
% the acknowledgment of grants, issue a \IEEEoverridecommandlockouts
% after \documentclass

% for over three affiliations, or if they all won't fit within the width
% of the page, use this alternative format:
% 

\author{
\IEEEauthorblockN{
Lin Wang\IEEEauthorrefmark{1}\IEEEauthorrefmark{2}\IEEEauthorrefmark{6},
Fa Zhang\IEEEauthorrefmark{1}\IEEEauthorrefmark{2},
Kai Zheng\IEEEauthorrefmark{3},
Athanasios V. Vasilakos\IEEEauthorrefmark{4},
Shaolei Ren\IEEEauthorrefmark{5},
Zhiyong Liu\IEEEauthorrefmark{2}\IEEEauthorrefmark{7}
}
\IEEEauthorblockA{
\IEEEauthorrefmark{1}Key Laboratory of Intelligent Information Processing, Chinese Academy of Sciences} 
\IEEEauthorblockA{
\IEEEauthorrefmark{2}Institute of Computing Technology, Chinese Academy of Sciences, China} 
\IEEEauthorblockA{
\IEEEauthorrefmark{3}IBM China Research Lab, China} 
\IEEEauthorblockA{
\IEEEauthorrefmark{4}University of Western Macedonia, Greece} 
\IEEEauthorblockA{
\IEEEauthorrefmark{5}Florida International University, USA} 
\IEEEauthorblockA{
\IEEEauthorrefmark{6}University of Chinese Academy of Sciences, China}
%\IEEEauthorblockA{
%\IEEEauthorrefmark{6}Key Laboratory of Intelligent  Information Processing, ICT, CAS, China}
\IEEEauthorblockA{
\IEEEauthorrefmark{7}State Key Laboratory for Computer Architecture, ICT, CAS, China}
}

\iffalse
\author{
\IEEEauthorblockN{
Lin Wang$^{1,2,6}$, Fa Zhang$^{1,2}$, Kai Zheng$^3$, Athanasios V. Vasilakos$^4$, Shaolei Ren$^5$, and Zhiyong Liu$^{2,7}$
}
\normalsize
$^1$Key Laboratory of Intelligent Information Processing, Chinese Academy of Sciences\\
$^2$Institute of Computing Technology, Chinese Academy of Sciences \\
$^3$IBM China Research Lab~~~~$^4$University of Western Macedonia, Greece\\
$^5$Florida International University, USA~~~~$^6$University of Chinese Academy of Sciences\\
$^7$State Key Laboratory for Computer Architecture, Institute of Computing Technology, CAS
}
\fi

% use for special paper notices
%\IEEEspecialpapernotice{(Invited Paper)}

% make the title area
\maketitle

\begin{abstract}
%\boldmath
The power consumption of enormous network devices in data centers has emerged as a big concern to data center operators. Despite many traffic-engineering-based solutions, very little attention has been paid on performance-guaranteed energy saving schemes. In this paper, we propose a novel energy-saving model for data center networks by scheduling and routing ``deadline-constrained flows'' where the transmission of every flow has to be accomplished before a rigorous deadline, being the most critical requirement in production data center networks. Based on speed scaling and power-down energy saving strategies for network devices, we aim to explore the most energy efficient way of scheduling and routing flows on the network, as well as determining the transmission speed for every flow. We consider two general versions of the problem. For the version of only flow scheduling where routes of flows are pre-given, we show that it can be solved polynomially and we develop an optimal combinatorial algorithm for it. For the version of joint flow scheduling and routing, we prove that it is strongly NP-hard and cannot have a Fully Polynomial-Time Approximation Scheme (FPTAS) unless P=NP. Based on a relaxation and randomized rounding technique, we provide an efficient approximation algorithm which can guarantee a provable performance ratio with respect to a polynomial of the total number of flows. 

%Apart from servers, the power consumed by the enormous network devices in a data center also emerges as a primary concern. Despite the many traffic engineering based proposals, very little attention has been paid on performance-guaranteed energy saving schemes. In this paper, we consider a novel energy saving model targeting better energy efficiency by scheduling "deadline-constrained flows" where the transmission of every flow has to be accomplished before a rigorous deadline, being a critical requirement in many production data centers. We study two general versions of the problem. For the version with only scheduling, we show that it can be solved in polynomial time and we propose an optimal combinatorial algorithm for it. For the version with both scheduling and routing, we prove that it is strongly NP-hard and cannot have a Polynomial-Time Approximation Scheme. Based on a relaxation and randomized rounding based process, we devise an efficient approximation algorithm which can achieve a provable performance ratio with respect to a polynomial of the total number of flows. 
\end{abstract}

\section{Introduction}
\label{sec:intro}

Cloud computing has become a fundamental service model for the industry. In order to provide sufficient computing resources in clouds, large-scale data centers have been extensively deployed by many companies such as Google, Amazon and Microsoft. While providing powerful computing ability, those data centers are bringing a significant level of energy waste due to the inefficient use of hardware resources, resulting in both high expenditure and environmental concern.

Obviously, the servers should be the first target for energy reduction as they are the most energy-consuming component in a data center.   
By involving techniques such as Dynamic Voltage Frequency Scaling (DVFS) or hardware virtualization, the energy efficiency of servers has been improved to a large extent. As a result, the network device, as the second-place energy consumer, has taken a large portion in the total energy expenditure of a data center, bringing about urgent economic concerns over data center operators.

The problem of improving the network energy efficiency in data centers has been extensively explored (e.g., \cite{Heller_Seetharaman, Shang_Li, Wang_Yao,  Vasic_Bhurat-2011, Wang_Zhang-JSAC-2013}). Despite some energy-efficient network topologies for data centers, most of the solutions are concentrated on traffic engineering which aims to consolidate network flows and turn off unused network devices. The essential principle underlying this approach is that data center networks are usually designed with a high level of connectivity redundancy to handle traffic peak and that the traffic load in a data center network varies significantly over time. Due to the fact that the idle power consumed by the chassis usually takes more than half of a switch's total power consumption \cite{Mahadevan_Sharma}, turning off the switch during idle period should give the most power reduction in theory.

However, the practicality of these aforementioned solutions are quite limited because of the following two aspects: $\mathit{i})$ Most of the traffic-engineering-based approaches are ineluctably dependent on traffic prediction which seems not feasible or not precise enough \cite{Benson_Anand}. This is because the traffic pattern in a data center network largely depends on the applications running in the data center. Without precise traffic prediction, the network configuration generated by the energy-saving unit has to be updated frequently. Consequently, the network will be suffering from oscillation; $\mathit{ii})$ Saving energy leads to performance degradation. Most of the solutions only focus on energy efficiency without considering the network performance (e.g. throughput, delay). This will dramatically bring down the reliability of the network, which is not acceptable in practice as providing high performance is the primary goal in a network.

In order to overcome the above two limitations, we propose to view the network traffic from the application-level aspect instead of making use of the static network status (loads on links) that is rapidly monitored from the network or predicted (e.g. \cite{Heller_Seetharaman}). We observe that while the aggregate traffic load varies over time, the most critical factor that conditions the performance of many data exchanges in data centers is meeting flow deadlines  (\cite{Wilson_Ballani, Vamanan_Hasan, Hong_Caesar, Zats_Das, Alizadeh_Yang}). This is due to the fact that representative data center applications such as search and social networking usually generate a large number of small requests and responses across the data center that are combined together to perform a user-requested computation. As user-percieved performance is evaluated by the speed at which the responses to all the requests are collected and delivered to users, short or guaranteed latency for each of the short request/response flow is strongly required. Given a threshold for tolerable response latency, the system efficiency will be definitely conditioned by the number of flows whose deadlines are met (that are completed within the time threshold).

Inspired by this observation, we consider to represent the networking requirements of applications as a set of deadline-constrained flows\footnote{If not specified, ``flow'' in this paper refers to a certain amount of data that has to be transmitted from a source to a destination on the network.} and we aim to design particular energy-efficient scheduling and routing schemes for them. Although the job scheduling on single or parallel processors with deadline constrains has been extensively studied, little attention has been paid on the job scheduling problem on a multi-hop network \cite{Mao_Koksal}, especially with the objective of optimizing the energy consumption. To the best of our knowledge, this is the first solution that theoretically explores energy-efficient schemes by scheduling and routing deadline-constrained flows in data center networks.

To summarize our main contributions in this paper:

\iffalse
\begin{itemize}
\item We describe the deadline-constrained network energy saving problem in a comprehensive way and provide models for two general versions of this problem -- Deadline-Constrained Flow Scheduling (DCFS) and Deadline-Constrained Flow Scheduling and Routing (DCFSR);
\item We show that DCFS can be optimally solved in polynomial time and we propose an optimal combinatorial algorithm for it;
\item We show by in-depth analysis that solving DCFSR is strongly NP-hard and cannot have a Polynomial-Time Approximation Scheme (PTAS); and
\item We provide an efficient approximation algorithm which solves the problem with a provable performance ratio.
\end{itemize}
\fi
$\mathit{i})$ We describe the deadline-constrained network energy saving problem and provide comprehensive models for two general versions of this problem -- Deadline-Constrained Flow Scheduling (DCFS) and Deadline-Constrained Flow Scheduling and Routing (DCFSR); $\mathit{ii})$ We show that DCFS can be optimally solved in polynomial time and we propose an optimal combinatorial algorithm for it; $\mathit{iii})$ We show by in-depth analysis that solving DCFSR is strongly NP-hard and cannot have a fully Polynomial-Time Approximation Scheme (FPTAS) unless P=NP; $\mathit{iv})$ We provide an efficient approximation algorithm which solves the problem with a provable performance ratio.

The remainder of this paper is organized as follows. Section~\ref{sec:model} presents the modeling for the deadline-constrained network energy saving problem where two versions of the problem are introduced. Section~\ref{sec:dts} discusses the DCFS problem where an optimal combinatorial algorithm is provided to solve it. Section~\ref{sec:dtsr} discusses the DCFSR problem and presents some complexity and hardness analysis. Section~\ref{sec:approx} presents an approximation algorithm with guaranteed performance ratio for DCFSR where some numerical results are also provided. Section~\ref{sec:related} summarizes related work and section~\ref{sec:conc} concludes the paper.

\section{The Model}
\label{sec:model}

Based on some preliminary definition, we provide the general modeling for the deadline-constrained network energy saving problem in this section. 

\subsection{The Data Center}

We model a data center as a distributed computing system where a set of servers is connected with a network $\mathcal{G}=(\mathcal{V},\mathcal{E})$ where $\mathcal{V}$ is the set of nodes (switches and hosts) and $\mathcal{E}$ is the set of network links. We assume all the switches, as well as all the links, in $\mathcal{V}$ are identical which is reasonable because advanced data center networks such as fat-tree \cite{Al-Fares_Loukissas-FatTree-2008} or BCube \cite{Guo_Lu-BCube-2009} are usually conducted on identical commodity switches. We use the classical queueing model for links, that is, a link is modeled as a forwarding unit with buffers at its two ends. When the switch finishes processing a data packet, the egress port for this packet will be determined and this packet will be injected into the buffer of the egress link. The packets that queue in the buffer will be transmitted in order according to some preset packet scheduling policy.

We consider the power consumption of network components such as ports and links which are the main power consumers that can be manipulated for energy conservation.\footnote{The biggest power consumer in a switch, the chassis, cannot achieve power proportionality easily due to drastic performance degradation. Nevertheless, our approach is a complement and can be incorporated with switch-level power-down based solutions.} With a slight abuse of notation, the power consumption of the ports at the ends of a link is also abstracted into the power consumption of the link for the ease of exposition. For the power consumption model, we adopt the power function from \cite{Andrews_Fernandez-SS-2010} which is an integration of the power-down and the speed scaling model that has been widely used in the literature. For each link $e \in \mathcal{E}$, a power consumption function $f_e(x_e)$ is given to characterize the manner in which energy is being consumed with respect to the transmission rate $x_e$ of link $e$. We assume uniform power functions as the network is composed of identical switches and links. Formally, for every link we are given a function $f(\cdot)$ which is expressed by
\begin{equation}
\label{eqn:power_func}
f(x_e) = \left\lbrace
\begin{aligned}
&0 & x_e = 0& \\
&\sigma + \mu x_e^{\alpha} & 0 < x_e \leq C & 
\end{aligned}
\right., 
\end{equation}
where $\sigma, \mu$ and $\alpha$ are constants associated with the link type. Constant $\sigma$ is known as the idle power for maintaining link states, while $C$ is the maximum transmission rate of a link.
Normally, the power function $f(\cdot)$ is superadditive, i.e., $\alpha > 1$. In order to get rid of the network stability problem introduced by frequently togging on and off links, we assume that a link can be turned off only when it carries no traffic during the whole given period of time. Making this assumption also helps reduce the considerable power incurred by changing the state of a link, as well as the wear-and-tear cost.

\subsection{Applications}

We model an application as a set of deadline-constrained flows each of which consists of a certain amount of data that has to be routed from a source to a destination on the network within a given period of time. In order to avoid packet reordering at the destination end, we assume that each flow can only follow a single path. Nevertheless, multi-path routing protocols can be incorporated in our model by splitting a big flow into many small flows with the same release time and deadline at the source end and each of the small flows will follow a single path.

Let $[T_0,T_1]$ be a fixed time interval, during which a set $\mathcal{J} = \{j_1,j_2,\ldots,j_n\}$ of flows has to be routed on the network. Associated with each flow $j_i \in \mathcal{J}$ are the following parameters:
\begin{itemize}
\item $w_i$ amount of data that needs to be routed,
\item $r_i$, $d_i$ its release time and deadline respectively, and
\item $p_i$, $q_i$ its source and destination respectively.
\end{itemize}

Without loss of generality, we assume $T_0 = \min_{j_i\in \mathcal{J}}r_i$ and $T_1 = \min_{j_i\in \mathcal{J}}d_i$. In our setting, we allow preemption, i.e., each flow can be suspended at any time and recovered later. We define $S_i = [r_i, d_i]$ as the \emph{span} of flow $j_i$ and we say that $j_i$ is active at time $t$ if $t \in S_i$. The \emph{density} of flow $j_i$ is defined as $D_i = w_i/(d_i - r_i)$. A schedule is a set 
\begin{equation}
	\mathcal{S} = \left\{(s_i(t),\mathcal{P}_i)~|~\forall j_i \in \mathcal{J}, \forall t \in [r_i,d_i]\right\} 
\end{equation}
where $s_i(t)$ is the transmission rate chosen for flow $j_i$ at time $t$ and $\mathcal{P}_i$ is the set of links that are on the chosen path for carrying the traffic from this flow. A schedule is called feasible if every flow can be accomplished within its deadline following this schedule, i.e., $\mathcal{S}$ satisfies
\begin{equation}
\int_{r_i}^{d_i} s_i(t) dt = w_i, \forall j_i \in \mathcal{J}. 
\end{equation}
We define $\mathcal{E}_a$ as the set of active links where
\begin{equation}
\mathcal{E}_a = \{e \in \mathcal{E}~|~\exists t \in [T_0,T_1], x_e(t) > 0\}.
\end{equation}
Consequently, the total energy consumed by all the links during $[T_0,T_1]$ in a schedule $\mathcal{S}$ can be expressed by
\begin{equation}
\Phi_f(\mathcal{S}) = (T_1-T_0)\cdot|\mathcal{E}_a|\cdot\sigma + \int_{T_0}^{T_1} \sum_{e \in \mathcal{E}_a}\mu\left(x_e(t)\right)^{\alpha}dt  
\end{equation}
where $x_e(t)$ is the transmission rate of link $e$ at time $t$ and $x_e(t) = s_i(t)$ if flow $j_i$ is being transmitted on link $e$ at time $t$. Our objective is to find a feasible schedule that minimizes $\Phi_f(\mathcal{S})$. Depending on whether the routing protocol is given or not, we have two versions of this problem which we call DCFS (Deadline-Constrained Flow Scheduling) and DCFSR (Deadline-Constrained Flow Scheduling and Routing). We will discuss them separately in the following sections. 

\section{Deadline-Constrained Flow Scheduling}
\label{sec:dts}

In this section, we discuss the DCFS problem. Specifically, we model this problem as a convex program and show that it can be optimally solved. We then provide an optimal combinatorial algorithm for it.

\subsection{Preliminaries}

In DCFS, the routing paths for all the flows are provided. Routing the flows with these paths, each link will be assigned with a set of flows $\mathcal{J}_e = \{j_i~|~e \in \mathcal{P}_i\}$.
We omit those inactive links that satisfy $\mathcal{J}_e = \emptyset$ since they will never be used for transmitting data. Thus the set of active links is $\mathcal{E}_a = \mathcal{E} \setminus \{e \in \mathcal{E}~|~\mathcal{J}_e = \emptyset\}$. As all the links in $\mathcal{E}_a$ will be used, we simplify the problem by replacing the power consumption function with $g(x_e) = \mu x_e^{\alpha}$. Consequently, the objective of the problem becomes to find a feasible schedule $\mathcal{S}$ such that
\begin{equation}
\Phi_g (\mathcal{S}) = \int_{T_0}^{T_1} \sum_{e \in \mathcal{E}_a}g\left(x_e(t)\right)dt 
\end{equation}
is minimized. For the sake of tractability, we first consider the case where the routing path for each flow is a virtual circuit. That is, when a flow is being routed, all the links on the routing path of this flow will be totally occupied by the packets from this flow. Nevertheless, we will show that this assumption is generally true in the optimal solution and it can be realized by assigning priorities to the packets from each flow in a packet-switching network. 

We define the \emph{minimum-energy schedule} as the schedule that minimizes the total power consumption but may not satisfy the maximum transmission rate constraint on each link. Then, we introduce the following lemmas.
\begin{lemma}
\label{lm:unique}
The minimum-energy schedule will use a single transmission rate for every flow.
\end{lemma}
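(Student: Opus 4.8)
The plan is to exploit the strict convexity of $g(x) = \mu x^{\alpha}$ (recall $\alpha > 1$) through a Jensen-type exchange argument applied independently to each flow. First I would fix an arbitrary minimum-energy schedule $\mathcal{S}$ and, for each flow $j_i$, denote by $A_i \subseteq [r_i,d_i]$ the (measurable) set of instants at which $s_i(t) > 0$; since $w_i > 0$ this set has positive measure $\tau_i = |A_i|$. The virtual-circuit assumption is exactly what makes the objective separable across flows: at any instant each active link carries at most one flow, and whenever $j_i$ transmits it occupies every link of its path $\mathcal{P}_i$ at the common rate $s_i(t)$. Hence $x_e(t) = s_i(t)$ for all $e \in \mathcal{P}_i$ and $t \in A_i$, the active sets $\{A_i : j_i \in \mathcal{J}_e\}$ of the flows sharing a link $e$ are pairwise disjoint, and the energy collapses to a per-flow sum
\begin{equation}
\Phi_g(\mathcal{S}) = \sum_{j_i \in \mathcal{J}} |\mathcal{P}_i| \int_{A_i} g\left(s_i(t)\right) dt.
\end{equation}

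Next I would apply Jensen's inequality term by term. For each $i$ the feasibility constraint reads $\int_{A_i} s_i(t)\, dt = w_i$, so replacing the rate profile on $A_i$ by the single constant $w_i/\tau_i$ yields $\int_{A_i} g(s_i(t))\, dt \geq \tau_i\, g(w_i/\tau_i)$, with equality if and only if $s_i$ is almost-everywhere constant on $A_i$. The key observation is that this substitution leaves $A_i$ itself untouched: it keeps the transmitted volume equal to $w_i$, keeps all activity inside the span $[r_i,d_i]$, and preserves the pairwise disjointness on every shared link that the virtual circuit demands. Therefore flattening the rate of one flow perturbs no other flow's timing, the modified schedule remains feasible, and its energy is no larger than that of $\mathcal{S}$.

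Finally I would close the argument by contradiction. If some flow $j_i$ transmitted at a non-constant rate on $A_i$, then strict convexity of $g$ (guaranteed by $\alpha > 1$) makes the inequality above strict; since $|\mathcal{P}_i| \geq 1$, the flattened schedule would have strictly smaller total energy, contradicting the minimality of $\mathcal{S}$. Hence in any minimum-energy schedule every flow $j_i$ transmits at the single rate $w_i/\tau_i$ throughout $A_i$.

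I expect the delicate point to be the energy decomposition rather than the convexity step. One must argue carefully from the virtual-circuit model that each link's contribution $\int g(x_e(t))\,dt$ splits cleanly over the disjoint active sets of the flows routed through it, so that the rate of a single flow can be flattened \emph{without} disturbing the schedule of any flow that shares a link with it. Once that separation is established, the conclusion follows immediately from Jensen's inequality and the strict convexity induced by $\alpha > 1$.
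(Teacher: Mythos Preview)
Your argument is correct and rests on the same core idea as the paper's proof: strict convexity of $g(x)=\mu x^{\alpha}$ implies that averaging the rate of a flow over its active time can only lower the energy, and strictly lowers it unless the rate was already constant. The paper presents this in a more elementary form: it assumes just two rates $s_1,s_2$ on consecutive subintervals $[a,t]$ and $[t,b]$, replaces them by the time-weighted average, and appeals directly to the convexity inequality. Your version is more general and more careful on two points the paper glosses over: you treat an arbitrary measurable rate profile via Jensen rather than only a two-piece case, and you explicitly justify---using the virtual-circuit assumption---that flattening one flow's rate on its own active set $A_i$ leaves every other flow's schedule and feasibility untouched. That separability argument is exactly the ``delicate point'' you anticipated, and the paper does not spell it out. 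So the approaches coincide in spirit; yours simply fills in the gaps.
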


\begin{proof}
We prove it by contradiction. Suppose we are given an instance of DCFS and a minimum-energy schedule $\mathcal{S}$ for this instance where we have two different transmission rates for only one of the flows. For the ease of exposition, we assume that the time interval\footnote{It can also be extended to the case where we have more than one interval for routing this flow as we only focus on this flow.} that this flow is being routed on the network is $[a,b]$ in $\mathcal{S}$ and there is a time point $t$ ($a < t < b$) such that the transmission rate is $s_1$ in interval $[a,t]$ and $s_2$ in interval $[t,b]$, respectively. Now, instead of using $s_1$ and $s_2$ in the two different intervals, we propose another schedule $\mathcal{S}'$ where we use a single rate $\frac{(t-a)s_1 + (b-t)s_2}{b-a}$ through the whole interval $[r,d]$. Due to the convex property of $g(\cdot)$, it is easy to verify that
\begin{equation}
\mu (t-a)s_1^{\alpha} + \mu(b-t)s_2^{\alpha} > \mu\left(\frac{(t-a)s_1 + (b-t)s_2}{b-a}\right)^{\alpha}. \nonumber
\end{equation}
The above inequality is equivalent to $\Phi_g(\mathcal{S}) > \Phi_g(\mathcal{S}')$, which contradicts the assumption that $\mathcal{S}$ is optimal. This completes the proof.
\end{proof}

\begin{lemma}
\label{lm:minimum}
The minimum-energy schedule will choose an as small as possible transmission rate for each flow such that the deadlines of flows can be guaranteed.
\end{lemma}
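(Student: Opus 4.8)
The plan is to argue by contradiction with a local exchange argument, in the same spirit as the proof of Lemma~\ref{lm:unique}, but now exploiting the fact that the per-flow energy is monotone in the transmission rate. By Lemma~\ref{lm:unique} I may assume that the minimum-energy schedule $\mathcal{S}$ assigns a single rate $s_i$ to each flow $j_i$. Since we adopt the virtual-circuit model, at every instant each active link carries the packets of exactly one flow, so the total energy decomposes cleanly over flows: flow $j_i$ occupies each of the $|\mathcal{P}_i|$ links on its path for a total duration $w_i/s_i$ at power $g(s_i)=\mu s_i^{\alpha}$, contributing
\begin{equation}
	|\mathcal{P}_i|\cdot g(s_i)\cdot\frac{w_i}{s_i} = |\mathcal{P}_i|\,\mu\, w_i\, s_i^{\alpha-1} \nonumber
\end{equation}
to $\Phi_g(\mathcal{S})$. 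Because $\alpha>1$, this quantity is strictly increasing in $s_i$; hence lowering any single flow's rate strictly reduces its own energy contribution while leaving the contributions of all other flows untouched.

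First I would suppose, for contradiction, that in $\mathcal{S}$ some flow $j_i$ is transmitted at a rate $s_i$ strictly larger than necessary, i.e.\ there is a smaller rate $s_i'<s_i$ whose longer transmission window (of length $w_i/s_i'\le d_i-r_i$) still fits inside $[r_i,d_i]$ without colliding with any other flow on the links of $\mathcal{P}_i$. I would then construct $\mathcal{S}'$ agreeing with $\mathcal{S}$ on every flow except $j_i$, which is now sent at rate $s_i'$ over the correspondingly longer interval, consuming the idle (``slack'') time already available on its path; preemption permits this re-spreading. By the monotonicity displayed above, $\Phi_g(\mathcal{S}')<\Phi_g(\mathcal{S})$, contradicting the optimality of $\mathcal{S}$. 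Consequently every flow must run at the smallest rate for which its deadline can still be met given the rest of the schedule, which is exactly the assertion.

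The step I expect to be the main obstacle is justifying the feasibility of the rate reduction in the presence of shared links: stretching $j_i$ into a longer window must neither collide with other flows on the links common to $\mathcal{P}_i$ nor push any other flow past its own deadline. The observation that resolves this is that whenever $s_i$ exceeds the smallest rate compatible with flow~$j_i$'s span and the occupancy of the other flows on $\mathcal{P}_i$, there is leftover idle time somewhere on $\mathcal{P}_i$ within $[r_i,d_i]$; by the virtual-circuit convention this idle time is used by no other flow on those links, so reallocating part of $j_i$'s transmission into it lowers $s_i$ without touching any other flow. Applying this reduction flow by flow drives every rate down to the boundary of feasibility, whence no flow in the minimum-energy schedule can have its rate lowered without violating some deadline.
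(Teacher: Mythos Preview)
Your proposal is correct and follows essentially the same approach as the paper: both invoke Lemma~\ref{lm:unique} to reduce to a single rate per flow, compute the per-flow energy contribution as (a constant times) $\mu\,w_i\,s_i^{\alpha-1}$, and conclude from the strict monotonicity of this expression in $s_i$ for $\alpha>1$ that the minimum-energy schedule must drive each rate to its feasible lower bound. The paper's argument is terser---it simply states the monotonicity and stops, without explicitly framing a contradiction or discussing the feasibility of reallocating into idle time on shared links---whereas you spell out the exchange construction in detail; but the core idea is identical.
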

\begin{proof}
We focus on one flow with an amount $w$ of data that has to be routed in time interval $[r,d]$. The routing path for this flow is denoted by $\mathcal{P}$ and the number of links in $\mathcal{P}$ is given by $|\mathcal{P}|$. Using Lemma~\ref{lm:unique}, we assume a single transmission rate $s$ is given to process this flow. The total energy consumed by the links for routing this flow can be expressed by $\Phi_g = \mu s^{\alpha} \cdot w/s = \mu \cdot w \cdot s^{\alpha - 1}$. As long as $\alpha > 1$, $\Phi_g$ is minimized when we have the minimum transmission rate $s$ for this flow such that the deadlines of all the flows can be satisfied. In this sense, the minimum-energy schedule will use the minimized transmission rate for each flow.
\end{proof}

Following the above lemma, we observe that as long as there are feasible schedules, the minimum-energy schedule is feasible. In other words, the minimum-energy schedule is also the optimal schedule for DCFS. Equivalently, the maximum transmission rate constraint $x_v(t) \leq C$ can be relaxed in DCFS. In the remainder of this section, we will omit that constraint.

\subsection{Problem Formulation}
We denote the transmission rate for flow $j_i$ as $s_i$ according to Lemma~\ref{lm:unique}.
The DCFS problem can be formulated as the following convex program.
\begin{equation}
\begin{aligned}
&(P_1)~~~~\min  \sum_{e \in \mathcal{E}_a} \sum_{j_i \in \mathcal{J}_v} w_i \cdot \mu s_i^{\alpha-1}  \\
\text{subject to}  \\
& \sum_{j_i \in \mathcal{J}'} \frac{w_i}{s_i} - (\max_{j_i \in \mathcal{J}'}d_i - \min_{j_i \in \mathcal{J}'}r_i) \leq 0 & \mathcal{J}' \subseteq \mathcal{J}_e  \\
& s_i > 0& \forall j_i \in \mathcal{J}
\end{aligned}
\nonumber
\end{equation}
The total transmission time and the total energy consumption of flow $j_i$ are $w_i/s_i$ and $w_i/s_i \cdot \mu s_i^{\alpha}=w_i\cdot\mu s_i^{\alpha-1}$, respectively. The first constraint forces that for an arbitrary link $e$, all the flows in any subset of $\mathcal{J}_e$ has to be processed before the last deadline of the flows in that subset. The second constraint represents that the transmission rate for each flows has to be larger than 0. It is easy to verify that program ($P_1$) is convex because the objective function is convex (as we assume $\alpha > 1$) while all the constraints are linear. As a result, the DCFS problem can be solved optimally in polynomial time by applying the Ellipsoid Algorithm \cite{Nesterov-Ellipsoid-1994}. However, as the Ellipsoid Algorithm is not practically used due to its high complexity in typical instances, we aim to construct an efficient combinatorial algorithm by exploring the characteristics of the minimum-energy schedule.

\subsection{An Optimal Combinatorial Algorithm}

We now provide a combinatorial algorithm which can always find the optimal schedule for DCFS. Before presenting the algorithm, we first give a characterization of the optimal schedule through the following example.
\begin{example}
Consider a line network whose topology is given in Fig.~\ref{fig:example_1}. The power consumption of the links is characterized by function $f(x_e) = x_e^2$. On this network we have two flows $j_1$ and $j_2$ that need to be routed. The details of the two flows are given by the following multi-tuples
\begin{eqnarray}
j_1 &\triangleq & (p_1=A,q_1=C,r_1=2,d_1=4,w_1=6), \nonumber\\
j_2 &\triangleq & (p_2=A,q_2=B,r_2=1,d_2=3,w_2=8). \nonumber
\end{eqnarray}
\end{example}
\begin{figure}[t!]
\centering
\includegraphics[scale=0.5]{./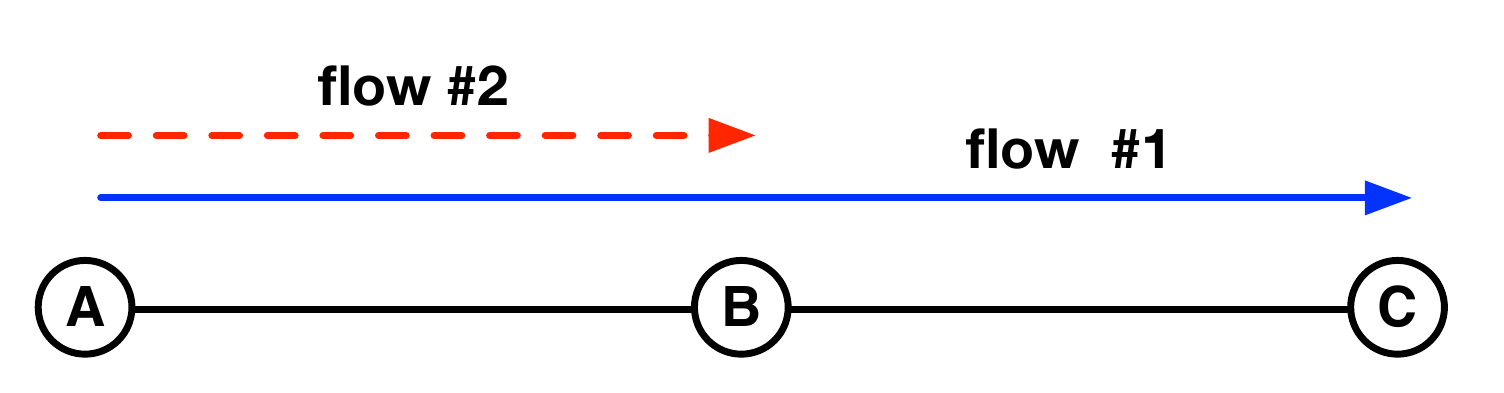}
\caption{\label{fig:example_1} A line network consisting of three nodes connected by two links. }
\end{figure}
According to Lemma~\ref{lm:unique}, we denote the transmission rates for $j_1$ and $j_2$ as $s_1$ and $s_2$, respectively. Consequently, we have the following three constraints $6/s_1 \leq d_1 - r_1 = 2$, $8/s_2 \leq d_2-r_2=2$ and $6/s_1 + 8/s_2 \leq d_1 - r_2=3$, while the objective function is $\Phi=2 \times 6\times s_1 + 8\times s_2$. It is easy to check that in the optimal schedule, $\sqrt{2} s_1 = s_2 = \frac{8+6\sqrt{2}}{3}$. We then construct an instance of the speed scaling problem on single processor (SS-SP) raised by Yao \emph{et al.} \cite{Yao_Demers_Shenker-YDS-1995}. Consider we have two jobs with required numbers of CPU cycles of $6\sqrt{2}$ and $8$ respectively, while the release times and deadlines are exactly the same as the two flows. Using the \textbf{Optimal-Schedule} algorithm (known as the \textbf{YDS} algorithm according to the authors' initials), the two jobs will be processed at the same speed of $\frac{8+6\sqrt{2}}{3}$ in time interval $[1,4]$. As a result, the objective value in the optimal schedule for this instance is exactly the same as the minimum $\Phi$ in our problem while the structure of the solution is also the same. Using this observation, we provide an optimal algorithm for solving the DCFS problem based on the \textbf{YDS} algorithm. 

We first construct from the DCFS problem a variant of the SS-SP problem by introducing a virtual weight $w_i' = w_i \cdot (|\mathcal{P}_i|)^{1/\alpha}$ for each flow $j_i \in J$. First of all, we present some definitions which are extended from \cite{Yao_Demers_Shenker-YDS-1995}.
\begin{definition}
The intensity of an interval $I = [a,b]$ on a link $e$ is defined by
\begin{equation}
\delta(I,e) = \frac{\sum_{[r_i,d_i]\subseteq [a,b] \wedge j_i \in \mathcal{J}_e}w_i'}{a\sim b} 
\end{equation}
where $a \sim b$ denotes the available time in interval $[a,b]$.
\end{definition}
Intuitively, the following inequality has to be satisfied,
\begin{equation}
\int_{a}^{b} x_e(t) dt/(a \sim b) \geq \delta(I,e), 
\end{equation}
which means that $\delta(I,e)$ is a lower bound on the average transmission rate of link $e$ in any feasible schedule over the interval $[a,b]$.
\begin{definition}
If an interval $I^*=[a,b]$ maximizes $\delta(I^*,e)$ for any $e \in \mathcal{E}_a$, we call $I^*$ a critical interval and $e$ is the corresponding critical link.
\end{definition}
Now we present the main algorithm that generates optimal schedules greedily by computing critical intervals iteratively. The pseudo-code of the algorithm \textbf{Most-Critical-First} is shown in Algorithm~\ref{alg:greedy}.
\begin{algorithm}[!t]
\caption{\label{alg:greedy} \textbf{Most-Critical-First}}
\textbf{Input:} data center network $\mathcal{G}=(\mathcal{V},\mathcal{E})$, set of flows $\mathcal{J}_e$ for $e \in \mathcal{E}$ and virtual weights $w_i'$ for each flow\\
\textbf{Output:} transmission rate $s_i$ and transmission time interval $[r_i',d_i']$ for each flow $j_i \in \mathcal{J}$

\begin{algorithmic}[1]
\WHILE{$\exists e \in \mathcal{E}, \mathcal{J}_e \neq \emptyset$}
	\STATE Find the critical interval $I^*$ and the critical link $e$. The flows in this interval can be represented by $\mathcal{J}^* = \{j_i~|~[r_i,d_i]\subseteq I^* \wedge e \in \mathcal{P}_i \}$ and without loss of generality, 
	\begin{equation}
		I^* = [a,b] =[\min_{j_i \in \mathcal{J}^*}r_i,\max_{j_i \in \mathcal{J}^*}d_i]. \nonumber
	\end{equation}
	\STATE Schedule the flows in $\mathcal{J}^*$ with the Earliest Deadline First (EDF) policy using transmission rate 
	\begin{equation}
		s_i = \frac{\sum_{j_i \in \mathcal{J}^*} w_i'}{(|\mathcal{P}_i|)^{1/\alpha}(a \sim b)} \nonumber
	\end{equation}
	for each flow $j_i \in \mathcal{J}^*$. The transmission time interval $[r_i',d_i']$ is also determined.
	\FOR{$j_i \in \mathcal{J}^*$}
		\STATE $\mathcal{J}_e \leftarrow \mathcal{J}_e \setminus j_i$ for $e \in \mathcal{P}_i$.
		\STATE For $e \in \mathcal{P}_i$, mark the time interval $[r_i',d_i']$ as unavailable on link $e$.
	\ENDFOR
\ENDWHILE
\end{algorithmic}
\end{algorithm}

The following theorem proves that a critical interval will determine a segment of the optimal schedule.
\begin{theorem}
\label{thm:interval-opt}
Let $I^*$ be a critical interval and $e$ be the corresponding critical link, algorithm \textbf{Most-Critical-First} can guarantee that the energy consumed by routing the flows in $\mathcal{J}^*$ is optimal.
\end{theorem}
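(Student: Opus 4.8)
The plan is to establish an exact, energy-preserving correspondence between the flows sharing the critical link $e$ and a single-processor speed-scaling (\textbf{YDS}) instance, and then to inherit the optimality of the \textbf{YDS} critical-interval schedule. First I would make precise the reduction suggested by the virtual weights. Map each flow $j_i \in \mathcal{J}_e$ to a virtual job of work $w_i' = w_i\,(|\mathcal{P}_i|)^{1/\alpha}$, release $r_i$ and deadline $d_i$, on a single processor whose power is $\mu s^{\alpha}$. If $j_i$ is routed at a single rate $s_i$ (optimal by Lemma~\ref{lm:unique}), it occupies link $e$ for time $w_i/s_i$ and costs $|\mathcal{P}_i|\,\mu\,w_i\,s_i^{\alpha-1}$ over its whole path. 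Running the virtual job at speed $s_i' = s_i\,(|\mathcal{P}_i|)^{1/\alpha}$ takes the identical time $w_i'/s_i' = w_i/s_i$ and costs $\mu\,(s_i')^{\alpha-1} w_i' = |\mathcal{P}_i|\,\mu\,w_i\,s_i^{\alpha-1}$, i.e. exactly the real path energy. Hence routing the flows of $\mathcal{J}_e$ and scheduling the virtual jobs are energy-identical problems, and $\delta(I,e)$ is precisely the \textbf{YDS} intensity of the virtual instance.

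Next I would identify the algorithm's output with the \textbf{YDS} critical-interval schedule. Writing $\delta^{*} = \delta(I^{*},e)$, the rate assigned by \textbf{Most-Critical-First} is $s_i = \delta^{*}/(|\mathcal{P}_i|)^{1/\alpha}$, so the corresponding virtual speed is $s_i' = \delta^{*}$ for every flow of $\mathcal{J}^{*}$ -- exactly the uniform intensity speed that \textbf{YDS} uses inside a critical interval, applied via EDF. The total virtual work of the contained jobs is $\sum_{j_i \in \mathcal{J}^{*}} w_i' = \delta^{*}\,(a \sim b)$, which at speed $\delta^{*}$ fills precisely the available time $a \sim b$ on link $e$; EDF therefore yields a feasible ordering, and the resulting energy for $\mathcal{J}^{*}$ equals $\mu\,(\delta^{*})^{\alpha-1}\sum_{j_i \in \mathcal{J}^{*}} w_i'$.

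Finally I would show this value is a lower bound for every feasible schedule, which gives optimality. Given any feasible schedule, Lemma~\ref{lm:unique} lets me assume single rates without increasing energy, so by the correspondence its energy on $\mathcal{J}^{*}$ equals the virtual-processor energy of the contained jobs. These jobs have spans inside $[a,b]$ and must be processed within the available time of length $a \sim b$ on link $e$; their total work $\delta^{*}(a \sim b)$ forces an average processor speed of at least $\delta^{*}$, and convexity of $s^{\alpha}$ (Jensen) makes the constant speed $\delta^{*}$ energy-minimal, matching the algorithm. The one place demanding care -- the main obstacle -- is arguing that link $e$ is genuinely the binding resource: I would invoke the maximality in the definition of a critical interval to show both that no other link in any $\mathcal{P}_i$ imposes a tighter timing constraint (so the bound stands) and that reserving the slots used by $\mathcal{J}^{*}$ cannot destroy feasibility of the remaining flows, since every other interval has intensity at most $\delta^{*}$ and thus leaves room, exactly as in the inductive correctness proof of \textbf{YDS}. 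Combining achievability with this lower bound yields that the energy spent on $\mathcal{J}^{*}$ is optimal.
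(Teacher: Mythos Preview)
Your proposal is correct and follows the same overall strategy as the paper: reduce the critical-interval subproblem to a single-processor speed-scaling (\textbf{YDS}) instance via the virtual weights $w_i' = w_i\,(|\mathcal{P}_i|)^{1/\alpha}$, and inherit optimality from the \textbf{YDS} critical-interval result. The paper reaches the reduction by a slightly different route: it writes the energy $\sum_{j_i\in\mathcal{J}^*} |\mathcal{P}_i|\,w_i\,s_i^{\alpha-1}$ together with the timing constraint $\sum_{j_i\in\mathcal{J}^*} w_i/s_i \le (a\sim b)$ on link $e$, invokes Lemma~\ref{lm:minimum} to assert this constraint is tight at optimum, and then applies Lagrange multipliers to derive $(|\mathcal{P}_i|)^{1/\alpha} s_i = \text{const}$ before recognizing the \textbf{YDS} instance. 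Your direct energy- and time-preserving change of variables $s_i' = s_i\,(|\mathcal{P}_i|)^{1/\alpha}$ arrives at the same reduction without the Lagrange detour, which is arguably cleaner. You are also more explicit than the paper about the point you flag as the ``main obstacle'' --- why the critical link $e$ is genuinely the binding resource and why reserving the slots for $\mathcal{J}^*$ preserves feasibility elsewhere --- which the paper's proof leaves implicit in its appeal to Lemma~\ref{lm:minimum} and the \textbf{YDS} theorem.
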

\begin{proof}
Denoting the transmission rate for each flow $j_i \in \mathcal{J}^*$ as $s_i$, the total energy consumed by routing the flows in $\mathcal{J}^*$ is expressed by
\begin{equation}
\Phi_g(\mathcal{J}^*) = \sum_{j_i \in J^*} |\mathcal{P}_i| \times w_i \times s_i^{\alpha-1}. 
\end{equation}
According to the second constraint in program $(P_1)$ we have
\begin{equation}
\sum_{j_i \in \mathcal{J}^*} \frac{w_i}{s_i} -(a \sim b) \leq 0. 
\end{equation}
It is clear that in the optimal schedule, the above inequality is exactly an equality because of Lemma~\ref{lm:minimum}. Then, $\Phi_g(\mathcal{J}^*)$ can be minimized by using the method of Lagrange multipliers. By introducing a Lagrange multiple $\lambda$, we construct a function
\begin{equation}
\begin{aligned}
L(s_1,s_2,\ldots,s_{|J^*|},\lambda) = \Phi_g(\mathcal{J}^*) + \lambda (\sum_{j_i \in \mathcal{J}^*} \frac{w_i}{s_i} -(a \sim b)). 
\end{aligned}
\end{equation}
By setting $\bigtriangledown L(s_1,s_2,\ldots,s_{|J^*|},\lambda) = 0$, we have
\begin{equation}
\left( |\mathcal{P}_1|\right)^{1/\alpha}s_1 = \ldots = \left( |\mathcal{P}_{|\mathcal{J}^*|}|\right)^{1/\alpha}s_n.
\end{equation}
This is equivalent to solving an instance of the SS-SP problem as we explained in Example~\ref{fig:example_1}, where we treat each flow as a job with weight $w_i' = (|\mathcal{P}_i|)^{1/\alpha}w_i$. Using Theorem~$1$ provided in \cite{Yao_Demers_Shenker-YDS-1995}, we set the processing speed of all the jobs to the same value of $\sum_{j_i \in \mathcal{J}^*}w_i' / (a\sim b)$, which will give the optimal energy consumption for the SS-SP problem, as well as our problem. That is, $\Phi_g(\mathcal{J}^*)$ is minimized by setting
\begin{equation}
\left( |\mathcal{P}_1|\right)^{1/\alpha}s_1 = \ldots = \left( |\mathcal{P}_{|\mathcal{J}^*|}|\right)^{1/\alpha}s_n = \frac{\sum_{j_i \in \mathcal{J}^*}w_i'}{ (a\sim b)} 
\end{equation}
which is reflected in the algorithm.
\end{proof}
Actually, algorithm \textbf{Most-Critical-First} solves a variant of the SS-SP problem based on the YDS algorithm. Consequently, the following result follows quickly from Theorem~\ref{thm:interval-opt}.
\begin{corollary}
The schedule produced by algorithm \textbf{Most-Critical-First} is optimal to the DCFS problem.
\end{corollary}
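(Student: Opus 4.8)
The plan is to prove the corollary by induction on the number of iterations of the main loop (equivalently on $|\mathcal{J}|$), showing that algorithm \textbf{Most-Critical-First} is a faithful generalization of the \textbf{YDS} algorithm whose optimality proof in \cite{Yao_Demers_Shenker-YDS-1995} lifts to the network setting. Theorem~\ref{thm:interval-opt} already does the heavy lifting for a single iteration: it shows that, for the current critical interval $I^*$ and critical link $e$, running every flow of $\mathcal{J}^*$ at the common virtual speed $\sum_{j_i \in \mathcal{J}^*} w_i'/(a\sim b)$ minimizes the energy $\Phi_g(\mathcal{J}^*)$ subject to the deadline constraint. What remains is to argue that committing to this segment loses nothing globally, i.e. that the energy of an optimal schedule decomposes additively into the cost on $\mathcal{J}^*$ during $I^*$ and the cost on the remaining flows over the contracted timeline.

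First I would establish the lower bound. Because $I^*=[a,b]$ is chosen to maximize the intensity $\delta(I,e)$ and every flow in $\mathcal{J}^*$ has its whole span contained in $I^*$, all of the work of $\mathcal{J}^*$ on the critical link must be completed inside $I^*$ in any feasible schedule; hence the average rate of $e$ over $I^*$ is at least $\delta(I^*,e)$, and by the convexity exploited in Lemma~\ref{lm:unique} the minimum energy attainable on these flows is exactly the uniform-rate value the algorithm produces. This gives $\Phi_g(\mathcal{J}^*)$ as a valid lower bound on what any feasible schedule must spend on $\mathcal{J}^*$.

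Next I would run an exchange argument to decouple the critical segment from the rest. Using the maximality of $\delta(I^*,e)$, I would show that there is an optimal global schedule in which, during $I^*$, the critical link carries \emph{only} flows of $\mathcal{J}^*$, and symmetrically the flows of $\mathcal{J}^*$ are transmitted \emph{only} within $I^*$: pushing any $\mathcal{J}^*$-work outside $I^*$ would force it to compete with intervals of strictly smaller intensity, so by convexity of $g(\cdot)$ it can only raise the energy, and injecting foreign work into $I^*$ does likewise. Once the schedule is normalized this way, I remove the flows $\mathcal{J}^*$ and mark $I^*$ as unavailable on every link of their paths, exactly as the algorithm does in lines 4--6; the residual is a smaller DCFS instance on the same network with a reduced available-time function. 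By the induction hypothesis \textbf{Most-Critical-First} solves this residual instance optimally, and adding its cost to $\Phi_g(\mathcal{J}^*)$ yields a schedule matching the global optimum. The base case, an empty flow set, is vacuous.

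I expect the main obstacle to be the decoupling step in the \emph{multi-link} setting. In single-processor \textbf{YDS} the exchange argument is clean because there is one timeline; here a flow occupies an entire path $\mathcal{P}_i$, the intensity is defined per link as a maximum over $\mathcal{E}_a$, and removing $I^*$ marks time as unavailable on several links at once. I would therefore need to verify that (i) the per-link maximum-intensity choice still isolates $I^*$ simultaneously on all the links touched by $\mathcal{J}^*$, (ii) the availability-marking leaves a consistent and feasible residual instance whose intensities remain well defined on the contracted timeline, and (iii) no cross-link coupling — a flow sharing links with both $\mathcal{J}^*$ and the residual set — permits a cheaper non-decomposed schedule to exist. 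Confirming that the virtual weights $w_i' = w_i(|\mathcal{P}_i|)^{1/\alpha}$ make the EDF feasibility and the intensity bookkeeping behave exactly as in the scalar problem is the crux that lets the reduction to \cite{Yao_Demers_Shenker-YDS-1995} go through.
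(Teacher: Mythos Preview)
Your plan follows essentially the same route as the paper: both treat \textbf{Most-Critical-First} as a lift of \textbf{YDS} to the network setting and invoke Theorem~\ref{thm:interval-opt} for the per-iteration optimality of the critical segment. The paper, however, is far terser---it simply asserts that the corollary ``follows quickly from Theorem~\ref{thm:interval-opt}'' because the algorithm ``solves a variant of the SS-SP problem based on the YDS algorithm,'' without spelling out the induction, the exchange argument, or the multi-link decoupling obstacles (i)--(iii) that you rightly flag as the crux; those concerns are not addressed in the paper's text and your proposal is in fact more careful than the published argument.
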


The time complexity of algorithm \textbf{Most-Critical-First} is bounded by $O(n^2|\mathcal{V}|)$. Note that the optimality of this algorithm is maintained based on the assumption that data in flows is routed exclusively through virtual circuits. We now show how to extend it to a packet-switching network: we assign a unique priority for all the packets from each flow according to the flow's starting time $r_i'$. That is, a flow $j_i$ with a smaller $r_i'$ will have a higher priority. This priority information can be encapsulated into the header of each packet and links will schedule those packets according to their priorities. 

\section{Deadline-Constrained Flow Scheduling and Routing}
\label{sec:dtsr}

In this section, we discuss the DCFSR problem. We aim at exploring the most energy-efficient scheduling and routing scheme for a given collection of flows. This problem is much harder than DCFS as we have to decide also the routing path for each flow, as well as the transmission rate. 

\subsection{Problem Formulation}

We observe that once we have the routing paths for all flows determined, finding the transmission rate for each flow is then the DCFS problem which can be optimally solved by algorithm \textbf{Most-Critical-First}. Let $\mathcal{P}_i$ denote the routing path for flow $j_i$. Keeping the notation we used before, the DCFSR problem can be formalized by the following program.
\begin{equation}
\begin{aligned}
&(P_2)~~~~\min \Phi_f \\
\text{subject to} \\
&\int_{r_i}^{d_i} s_i(t) dt \geq w_i  & \forall j_i \in \mathcal{J}\\
& s_i(t) \leq x_e(t) & \forall e \in \mathcal{P}_i \\
& 0 < x_e(t) \leq C  & \forall e \in \mathcal{E} \\
& s_i(t): \text{flow conservation}
\end{aligned}
\nonumber
\end{equation}
The first constraint represents that each flow has to be finished before its deadline. The second constraint means that the transmission rate of the flow that is being processed on a link $e$ cannot exceed the operation rate of that link, while the third one represents that the operation rate of a link has to be larger than zero and no larger than the maximum operation rate $C$. The flow conservation in the last constraint forces that $\mathcal{P}_i$ is a path connecting source $p_i$ and destination $q_i$ of flow $j_i$.

\subsection{Complexity and Hardness Results}

First, we provide the following definition and lemma as preliminaries based on our power consumption model. 
\begin{definition}
The \textbf{power rate} of a link $e$ ($x_e > 0$) is defined as the power consumed by each unit of traffic, i.e., $f(x_e)/x_e$.
\end{definition}
It can be observed that as long as the power rate of every link is minimized, the total power consumption of the network will be optimal. To minimize the power rate of a link, we show the following lemma.
\begin{lemma}
\label{lm:opt_rate}
Ideally, the optimal operation rate $R_{opt}$ for a link is given by $R_{opt} = \left( \frac{\sigma}{\mu(\alpha-1)} \right)^{1/\alpha}$.
\end{lemma}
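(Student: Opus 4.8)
The plan is to reduce the claim to an elementary one-dimensional minimization. By the definition of the \textbf{power rate} together with the expression for the power function $f$, for any operation rate $x_e > 0$ the quantity to be minimized is
\begin{equation}
\frac{f(x_e)}{x_e} = \frac{\sigma + \mu x_e^{\alpha}}{x_e} = \frac{\sigma}{x_e} + \mu x_e^{\alpha-1}. \nonumber
\end{equation}
Here the word \emph{ideally} signals that we drop the capacity constraint $x_e \le C$ and treat the link speed as a free variable ranging over $(0,\infty)$; writing $h(x) = \sigma x^{-1} + \mu x^{\alpha-1}$, the task becomes to locate the global minimizer of $h$ on $(0,\infty)$.

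First I would compute the stationary point. Differentiating gives $h'(x) = -\sigma x^{-2} + \mu(\alpha-1)x^{\alpha-2}$, and setting $h'(x) = 0$ yields $\mu(\alpha-1)x^{\alpha} = \sigma$, i.e.
\begin{equation}
x = \left(\frac{\sigma}{\mu(\alpha-1)}\right)^{1/\alpha} = R_{opt}. \nonumber
\end{equation}
Because $\alpha > 1$ and $\sigma,\mu > 0$, the right-hand side is a well-defined positive number, and since $x \mapsto x^{\alpha}$ is strictly increasing on $(0,\infty)$ this is the \emph{unique} positive root of $h'$. Thus $R_{opt}$ is the only critical point of the power rate.

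It then remains to certify that this unique critical point is in fact the global minimum, which is the only step requiring a little care: $h$ is not convex when $1 < \alpha < 2$ (the term $\mu x^{\alpha-1}$ is concave there), so I would avoid a convexity argument and instead use the boundary behavior. As $x \to 0^+$ the term $\sigma/x$ forces $h(x) \to +\infty$, and as $x \to \infty$ the term $\mu x^{\alpha-1}$ forces $h(x) \to +\infty$ since $\alpha > 1$. A function that is continuous on $(0,\infty)$, diverges to $+\infty$ at both endpoints, and possesses a single interior stationary point must attain its global minimum at that point; hence the power rate is minimized at $R_{opt}$, which in turn minimizes the total network power consumption. I expect this global-versus-local minimality argument to be the main (and essentially the only) obstacle; the remaining manipulations are routine calculus.
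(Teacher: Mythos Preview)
Your argument is correct: the paper states this lemma without proof, treating it as an immediate calculus fact, and your derivation---minimizing $f(x_e)/x_e = \sigma/x_e + \mu x_e^{\alpha-1}$ via the first-order condition and verifying global optimality through the boundary behavior---is exactly the intended elementary computation. Your care in avoiding a convexity claim for $1<\alpha<2$ is appropriate, though even a second-derivative check at the critical point (or simply noting that $h'$ changes sign from negative to positive there) would suffice.
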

Note that this operation rate is optimal in theory but is not always achievable in practice, as it can happen that $R_{opt}> C$. In general, we can prove that the decision version of DCFSR is NP-complete by providing the following theorem.
\begin{theorem}
Given a certain amount of energy $\Phi_0$, finding a schedule $\mathcal{S}$ for DCFSR such that $\Phi(\mathcal{S}) \leq \Phi_0$ is NP-complete.
\end{theorem}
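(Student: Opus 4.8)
The plan is to prove both membership in \textsf{NP} and \textsf{NP}-hardness. Membership is the easy direction. I would take as a certificate the routing assignment alone, i.e.\ the single path $\mathcal{P}_i$ chosen for each flow $j_i$; this has size $O(n|\mathcal{E}|)$ and is therefore polynomial. Once the routing is fixed, the residual problem is exactly DCFS, which Section~\ref{sec:dts} shows is solvable in polynomial time by \textbf{Most-Critical-First}. Hence the verifier computes the minimum energy $\Phi$ attainable for the certified routing and compares it against $\Phi_0$, all in polynomial time, placing the problem in \textsf{NP}. The only delicate point is that the optimal rates, and hence $\Phi$, may be algebraic irrationals; the comparison with $\Phi_0$ is nonetheless decidable, and in the hard instances constructed below all quantities are rational.

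For hardness I would reduce from 3-\textsc{Partition}, which is strongly \textsf{NP}-complete and, as a bonus, underpins the strong \textsf{NP}-hardness and the FPTAS-impossibility result advertised in the introduction. Given $3m$ integers $a_1,\dots,a_{3m}$ with $B/4 < a_i < B/2$ and $\sum_i a_i = mB$, I would build a two-terminal network with $m$ internally disjoint paths from a common source $p$ to a common sink $q$, each path using a single intermediate node so that the instance remains a simple graph. Each integer $a_i$ becomes a flow $j_i$ with source $p$, sink $q$, weight $w_i = a_i$, and the common span $[0,1]$. I would set the per-link capacity to $C = B$ and the budget to the value attained when every path carries load exactly $B$, namely $\Phi_0 = 2m(\sigma + \mu B^{\alpha})$ (two links per path). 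The construction is clearly polynomial.

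The crux is that the capacity constraint forces a perfectly balanced routing. Because each flow follows a single path and all $3m$ flows share the span $[0,1]$, a path assigned flows of total weight $W$ must transmit $W$ units within one time unit, so its average rate is at least $W$; the cap $C = B$ therefore limits each path's load to $B$. Since the total weight is $mB$ and there are exactly $m$ paths, feasibility forces every path to carry load exactly $B$, and the size restriction $B/4 < a_i < B/2$ forces each such group to be a triple. Thus a feasible schedule exists \emph{iff} the $a_i$ admit a 3-partition, in which case Lemmas~\ref{lm:unique} and~\ref{lm:minimum} fix the per-path optimum at the uniform rate $B$ and the energy equals exactly $\Phi_0$. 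Conversely, if no 3-partition exists then no routing is even feasible, so no schedule with $\Phi(\mathcal{S}) \le \Phi_0$ exists. This yields the desired equivalence.

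I expect the main obstacle to be ruling out ``loophole'' schedules that might meet the budget without encoding a valid partition. Schedules activating fewer than $m$ paths are excluded because the total available capacity would then drop below the demand $mB$; schedules exploiting preemption or time-varying rates are excluded by Lemma~\ref{lm:unique} (a single rate is optimal) together with Lemma~\ref{lm:minimum} (the minimal feasible rate is optimal), so the per-path optimum is already the uniform rate analyzed above; and the superadditivity of $f$, which would otherwise require a convexity-margin argument to penalize unbalanced loads, is here neutralized structurally by the capacity cap that makes every feasible routing perfectly balanced. Making these exclusions airtight, and verifying that the chosen $\sigma,\mu,\alpha,C$ admit no cheaper configuration than the balanced one, is the technical heart of the argument.
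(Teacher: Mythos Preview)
Your reduction is correct but follows a genuinely different route from the paper's. The paper also reduces from 3-\textsc{Partition}, but it builds a network with two nodes joined by $k\gg m$ parallel links and sets the power parameters so that $\sigma=\mu(\alpha-1)B^{\alpha}$, i.e.\ $R_{\mathrm{opt}}=B$. With abundant links and $C>B$, many routings are feasible; the \emph{energy budget} $\Phi_0=m\alpha\mu B^{\alpha}$ is what forces exactly $m$ links to be used at rate $B$, via the power-rate optimality of Lemma~\ref{lm:opt_rate}. Your construction instead fixes exactly $m$ disjoint two-edge paths and sets $C=B$, so that \emph{feasibility itself} forces every path to carry load exactly $B$; the budget $\Phi_0$ is then automatically met by any feasible schedule. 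The paper's argument exercises the energy model (and must separately treat $R_{\mathrm{opt}}>C$), whereas yours is a pure bin-packing feasibility argument that is agnostic to the relation between $R_{\mathrm{opt}}$ and $C$ and is arguably cleaner. You also make the \textsf{NP} membership explicit via the DCFS verifier, which the paper leaves implicit. One cosmetic point: since your argument hinges on the capacity constraint being tight, you may want to note explicitly that the ``minimum-energy schedule is feasible'' observation after Lemma~\ref{lm:minimum} (which drops the capacity cap) does not apply to your instance; this does not affect correctness, because you compute the cost directly at the forced rate $B$.
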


\begin{proof}
This can be proved by a simple reduction from the $3$-partition problem which is NP-complete \cite{Garey_Johnson-1990}. Suppose we are given an instance of the $3$-partition problem with a set $\mathcal{A}$ of $3m$ integers $a_1, a_2,\ldots, a_{3m}$ where $\sum_{i=1}^{3m} a_i = mB$ and $a_i \in [B/4, B/2]$. The problem is to decide whether $\mathcal{A}$ can be partitioned into $m$ disjoint subsets, i.e., $\mathcal{A} = \mathcal{A}_1 \cup \mathcal{A}_2 \cup \ldots \cup \mathcal{A}_m$ and $\mathcal{A}_i \cap \mathcal{A}_j = \emptyset$ for any $i, j \in m$, such that every subset $\mathcal{A}_i$ consists of $3$ integers and $\sum_{a \in \mathcal{A}_i} a = B$. Based on this $3$-partition instance, we construct an instance of DCFSR as follows: we are given a network where two nodes (denoted as \emph{src} and \emph{dst}) are connected in parallel by $k$ ($k >> m$) links. Assume we are given a set $\mathcal{J}$ of $3m$ flows each of which has an amount $a_i$ ($i \in [1,3m]$) of data needed to be transmitted from \emph{src} to \emph{dst} on the network. All the flows arrive at the same time and the data transmission has to be finished in one unit of time. We assume $B < C$ and $\sigma = \mu (\alpha-1) B^{\alpha}$, i.e., $R_{opt} = B$ and we set $\Phi_0 = m\cdot \alpha\mu B^{\alpha}$. We will show that there is a schedule $\mathcal{S}$ such that $\Phi(\mathcal{S}) \leq \Phi_0$ if and only if $\mathcal{A}$ can be partitioned in the way as in the optimal solution of the $3$-partition instance.

On the one hand, if there exists a partition for the $3$-partition instance, we have a solution $\mathcal{S}$ for the DCFSR instance where the flows are transmitted by $m$ links each with an operation rate $B$ according to the partition and the energy consumption in this solution is $\Phi(\mathcal{S}) = m\cdot \alpha\mu B^{\alpha}$. According to Lemma~\ref{lm:opt_rate}, this solution is optimal since the power rate for each link in this solution is optimal. Hence, it satisfies that $\Phi(\mathcal{S}) \leq \Phi_0$. On the other hand, if we obtain a solution $\mathcal{S}$ for the DCFSR instance such that $\Phi(\mathcal{S}) \leq \Phi_0$. It can then be derived that exactly $m$ links will be used and each link will use an operation rate $B$. Otherwise, the total energy consumption $\Phi(\mathcal{S})$ will be larger than $\Phi_0$ as the average power rate of the used links is larger than $f(B)/B$, so $\Phi(\mathcal{S}) > f(B)/B\cdot mB=m\cdot \alpha\mu B^{\alpha}$. Accordingly, we can construct a partition for the $3$-partition instance. In a nutshell, finding a partition for $3$-partition is equivalent to finding a solution $\mathcal{S}$ for DCFSR such that $\Phi(\mathcal{S}) \leq \Phi_0$. 

The above reduction is based on the assumption that $R_{opt} < C$, which is not necessarily true in reality. However, in the case $R_{opt} > C$, we just set $B=C$ and $\Phi_0 = m(\sigma + \mu C^{\alpha})$, and the same reduction can be built in a similar way.
\end{proof}
Then, it follows directly that
\begin{corollary}
Solving the DCFSR problem is strongly NP-hard.
\end{corollary}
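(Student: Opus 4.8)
The plan is to observe that the reduction already constructed in the proof of the theorem is in fact a \emph{pseudo-polynomial transformation}, so that the strong NP-completeness of $3$-partition transfers directly to DCFSR. Recall that a problem is \emph{strongly} NP-hard if it remains NP-hard even when all numerical parameters in the instance are bounded by a polynomial in the input length; equivalently, it admits no pseudo-polynomial-time algorithm unless P$=$NP. The $3$-partition problem is the canonical strongly NP-complete problem \cite{Garey_Johnson-1990}, i.e. it stays NP-complete even when each integer $a_i$ is encoded in unary, so that $\mathrm{Max}(I)$ is polynomially bounded in $\mathrm{Length}(I)$.

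First I would reduce the optimization problem to its decision version: any algorithm solving DCFSR exactly could answer the question ``is there a schedule with $\Phi(\mathcal{S}) \leq \Phi_0$?'' by computing the optimal cost and comparing it with $\Phi_0$. Hence it suffices to establish strong NP-hardness of the decision version treated in the theorem.

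Next, and this is the crux, I would verify that the transformation from a $3$-partition instance $I$ to the DCFSR instance $f(I)$ keeps every numerical parameter polynomially bounded in the magnitude of the largest integer of $I$. The constructed instance has $3m$ flows with data amounts $a_i$ copied verbatim, $k$ parallel links with $k$ polynomial in $m$, a common unit deadline, optimal rate $R_{opt}=B$, and energy threshold $\Phi_0 = m\,\alpha\mu B^{\alpha}$, where $B=\frac{1}{m}\sum_i a_i$ is itself bounded by the total of the $3$-partition integers. Since $\alpha$ and $\mu$ are fixed constants of the power model (not part of the input), $B^{\alpha}$ is bounded by a fixed-degree polynomial in $B$, so $\sigma=\mu(\alpha-1)B^{\alpha}$ and $\Phi_0$ remain polynomial in $\mathrm{Max}(I)$. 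Thus $\mathrm{Max}'(f(I))$ is polynomially bounded in $\mathrm{Max}(I)$ and $\mathrm{Length}(I)$, while $\mathrm{Length}'(f(I))$ grows at least polynomially in $\mathrm{Length}(I)$ because the $3m$ integers are directly encoded as flows; the transformation is clearly computable in polynomial time. These are precisely the conditions defining a pseudo-polynomial transformation.

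Finally, combining the three observations, a pseudo-polynomial transformation from a strongly NP-complete problem yields a strongly NP-complete decision problem, and therefore the optimization problem DCFSR is strongly NP-hard. The main obstacle I anticipate is the bookkeeping around $B^{\alpha}$ in $\Phi_0$ and $\sigma$: one must argue carefully that, because $\alpha$ is a constant of the fixed power function rather than an input quantity, $B^{\alpha}$ stays polynomially bounded in the $3$-partition magnitudes. Were $\alpha$ permitted to scale with the input, the threshold $\Phi_0$ could grow exponentially and the transformation would fail condition~(d) of being pseudo-polynomial, so the interpretation of $\alpha$ as a fixed constant is essential to the argument.
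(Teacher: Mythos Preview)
Your proposal is correct and follows essentially the same approach as the paper: the corollary is stated there with no proof beyond the phrase ``it follows directly,'' relying implicitly on the fact that the preceding reduction is from the strongly NP-complete $3$-partition problem. You have simply made explicit what the paper leaves tacit, namely that the construction is a pseudo-polynomial transformation (all numerical parameters, including $\sigma$ and $\Phi_0$, stay polynomially bounded in $\mathrm{Max}(I)$ once $\alpha$ is treated as a fixed constant), which is exactly the bookkeeping needed to inherit \emph{strong} NP-hardness rather than ordinary NP-hardness.
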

As a result, the DCFSR problem can only be solved by approximating the optimum. When we say that an algorithm approximates DCFSR with performance ratio $\gamma$, it means that the energy consumption in the solution produced by this algorithm is at most $\gamma$ times the minimum energy consumption. Given these, we aim at designing an algorithm to approximate the optimum with ratio $\gamma$ as small as possible. Unfortunately, for the case $R_{opt} > C$, which is likely to be the real situation as justified in \cite{Wang_Zhang-JSAC-2013}, the following theorem shows that $\gamma$ cannot be as small as we want since there is a lower bound for it.

\begin{theorem}
There exists an instance of problem DCFSR such that no approximation algorithm can guarantee a performance ratio smaller than $\frac{3}{2} \left(1 + \frac{(2/3)^{\alpha}-1}{ \alpha}\right)$ unless P=NP.
\end{theorem}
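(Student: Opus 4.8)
The plan is to exhibit a gap-producing reduction from the (weakly NP-hard) Partition problem \cite{Garey_Johnson-1990} tailored to the power-down-plus-speed-scaling cost. Given integers $a_1,\dots,a_n$ with $\sum_i a_i = 2B$ (and, without loss of generality, $a_i\le B$, since any larger item makes the Partition instance trivially a no-instance), I would reuse the parallel-link gadget from the NP-completeness proof: two nodes $\mathit{src},\mathit{dst}$ joined by $k$ identical links (with $k\ge n$, so placing each flow on its own link shows the instance is always feasible), one flow $j_i$ of size $a_i$ per integer, all released at time $0$ with a common deadline $1$. I would fix the capacity to $C=B$ and calibrate the idle power by $\sigma=\mu(\alpha-1)B^\alpha$, so that by Lemma~\ref{lm:opt_rate} we sit exactly at $R_{opt}=C$; taking $\sigma$ strictly larger pushes the instance into the genuine $R_{opt}>C$ regime and, as I note at the end, only increases the ratio, so the stated value is the extremal (boundary) bound.

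Next I would pin down the optimal energy on the two sides of the Partition gap. Because all flows share the span $[0,1]$, Lemmas~\ref{lm:unique} and~\ref{lm:minimum} restrict attention to solutions that assign each flow to a single link run at the constant rate equal to the total data placed on that link; the energy is then $k\sigma+\mu\sum_\ell x_\ell^\alpha$ subject to $\sum_\ell x_\ell=2B$ and each $x_\ell\le C=B$. On a yes-instance the partition yields two links at rate $B$, so $OPT_{YES}\le 2(\sigma+\mu B^\alpha)=2\alpha\mu B^\alpha$. On a no-instance two links are infeasible: two rates each $\le B$ summing to $2B$ must both equal $B$, which is exactly a valid partition. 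Hence at least three links are used, and I would lower-bound the cost over all $k\ge 3$.

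The heart of the argument is this lower bound. For a fixed number $k$ of links, convexity of $x\mapsto x^\alpha$ forces $\sum_\ell x_\ell^\alpha$ to be minimized by the balanced assignment $x_\ell=2B/k$, giving $h(k)=k\sigma+\mu(2B)^\alpha k^{1-\alpha}$. Differentiating and using precisely $\sigma=\mu(\alpha-1)B^\alpha$, the unconstrained minimizer of $h$ is at $k=2$, so $h$ is increasing for $k\ge 2$ and therefore $h(k)\ge h(3)=3\sigma+3\mu(2B/3)^\alpha$ for every feasible $k\ge 3$. Since the indivisibility of flows can only raise the cost above this continuous relaxation, $OPT_{NO}\ge 3\sigma+3\mu(2B/3)^\alpha$. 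Substituting $\sigma=\mu(\alpha-1)B^\alpha$ and simplifying gives
\begin{equation}
\frac{OPT_{NO}}{OPT_{YES}}\ \ge\ \frac{3\sigma+3\mu(2B/3)^\alpha}{2(\sigma+\mu B^\alpha)}\ =\ \frac{3}{2}\left(1+\frac{(2/3)^\alpha-1}{\alpha}\right).\nonumber
\end{equation}

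Finally I would close with the standard gap-to-hardness step. Set the threshold $\theta=3\sigma+3\mu(2B/3)^\alpha$. A purported algorithm with ratio $\gamma<\frac{3}{2}\bigl(1+\frac{(2/3)^\alpha-1}{\alpha}\bigr)$ would, on a yes-instance, return energy at most $\gamma\cdot OPT\le\gamma\cdot OPT_{YES}<\theta$, while on a no-instance it returns at least $OPT_{NO}\ge\theta$; comparing the output against $\theta$ decides Partition in polynomial time, forcing P$=$NP. For the strict regime I would observe that, writing $\rho=\sigma/(\mu B^\alpha)$, the ratio equals $\frac{3}{2}\cdot\frac{\rho+(2/3)^\alpha}{\rho+1}$, which is strictly increasing in $\rho$; since $R_{opt}>C$ corresponds to $\rho>\alpha-1$, every such instance inherits a bound at least as large as the stated one. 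I expect the delicate step to be the monotonicity of $h(k)$ over integers $k\ge 3$ (i.e.\ ruling out that scattering traffic over many under-loaded links is cheaper), which is exactly where the calibration $R_{opt}=C$ is used, together with the care needed to justify that the perfectly balanced continuous relaxation is a legitimate lower bound for the indivisible-flow optimum.
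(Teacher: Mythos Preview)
Your proposal is correct and follows essentially the same gap-preserving reduction from Partition that the paper uses: the same parallel-link gadget, the same calibration $\sigma=\mu(\alpha-1)C^{\alpha}$ (i.e.\ $R_{opt}=C$), and the same yes/no energy values leading to the ratio $\tfrac{3}{2}\bigl(1+\tfrac{(2/3)^{\alpha}-1}{\alpha}\bigr)$. Your treatment is in fact more thorough than the paper's: where the paper simply asserts $\Phi_{opt}\ge 3\sigma+3\mu(2C/3)^{\alpha}$ on a no-instance, you supply the missing justification via the convexity lower bound $h(k)=k\sigma+\mu(2B)^{\alpha}k^{1-\alpha}$ and its monotonicity for $k\ge 2$, and you also make explicit the monotonicity in $\rho=\sigma/(\mu B^{\alpha})$ that the paper invokes only through the inequality $\sigma\ge\mu C^{\alpha}(\alpha-1)$.
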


\begin{proof}
We prove this theorem by showing a gap-preserving reduction from the partition problem which is NP-complete. Suppose we are given an instance of the partition problem with a set $\mathcal{A}$ of $n$ integers. Assuming $\sum_{a \in \mathcal{A}} a = B$, the problem is whether it is possible to find a subset $\mathcal{A}' \subset \mathcal{A}$ such that $\sum_{a \in \mathcal{A}'} a = B/2$.  We now construct an instance of the DCFSR problem as follows:  we consider also the same network as the one in the previous proof where we assume $m > 2$ and the capacity of each link is given by $C = B/2$. We are also given a set $\mathcal{J}$ of flows each of which requires to route an amount $w_i$ of data from \emph{src} to \emph{dst} and $w_i = \mathcal{A}[i]$ for $j_i \in \mathcal{J}$. These flows arrive at the system at the same moment and have to be accomplished in one unit time. We denote by $\Phi_{opt}$ the optimal solution of the DCFSR instance, which represents the total energy consumed by the active links for tranmisstting the flows. Then, the following properties are preserved.
	\begin{align}
		\exists \mathcal{A}' \subset \mathcal{A}, \sum_{a \in \mathcal{A}'} a = B/2 	&\Longrightarrow \Phi_{opt} = 2b + 2 \mu C^{\alpha}, \nonumber\\
		\nexists \mathcal{A}' \subset \mathcal{A}, \sum_{a \in \mathcal{A}'} a = B/2 	&\Longrightarrow \Phi_{opt} \geq 3b + 3\mu (2C/3)^{\alpha}.\nonumber
	\end{align}
	Comparing both optimal solutions, we obtain a ratio $\gamma$ where
	\begin{align}
		\gamma 	&= \frac{3\sigma + 3 \mu \left(2C/3\right)^{\alpha}}{2\sigma + 2 C^{\alpha}} \geq \frac{3 \mu C^{\alpha}(\alpha - 1) + 3 \mu\left(2C/3\right)^{\alpha}}{2 \mu C^{\alpha}(\alpha - 1) + 2 \mu C^{\alpha}} \nonumber\\
				&= \frac{3}{2} \left(1 + \frac{(2/3)^{\alpha}-1}{ \alpha}\right) 
	\end{align}
where the inequality is obtained by applying $\sigma \geq \mu C^{\alpha}(\alpha - 1)$.
Combining with the two properties we derived, it is easy to conclude that as long as there is an approximation algorithm solving DCFSR with a performance ratio better than $\gamma$, a subset $\mathcal{A}'$ in the partition problem can be found such that $\sum_{a \in \mathcal{A}'} a = B/2$. However, it is well known that the partition problem is NP-complete and cannot be solved by any polynomial time algorithm optimally. As a result, no algorithm can approximate DCFSR with a performance ratio better than $\gamma$ unless P=NP.
\end{proof}
This directly implies that the DCFSR cannot have Fully Polynomial-Time Approximation Schemes (FPTAS) under the conventional assumption that P$\neq$NP \cite{Vazirani-2004}. In the next section, we will provide an efficient approximation algorithm.

\section{An Approximation Algorithm for DCFSR}
\label{sec:approx}

We present an approximation algorithm for DCFSR in this section. This algorithm is based on a relaxation and randomized rounding based process. We show by both analysis and numerical results that this approximation algorithm can guarantee a provable performance ratio.

\subsection{The Algorithm}

We first provide the following preliminaries. We define $\mathcal{T} = \{t_0, t_1,\ldots,t_K\}$ to be the set of release times and deadlines of all the flows such that $t_{k_1} < t_{k_2}$ for any $0 \leq k_1 < k_2 \leq K$. It is clear that $t_0 = \min_{j_i \in \mathcal{J}} \{r_i\}$ and $t_K = \min_{j_i \in \mathcal{J}} \{d_i\}$. We denote by $I_k$ the time interval $[t_{k-1},t_k]$ for $1\leq k \leq K$, by $|I_k|$ the length of interval $I_k$ and by $\beta_k = |I_k|/(t_K-t_0)$ the fraction of time that interval $I_j$ takes from the whole time of interest. We also define $\lambda = (t_K-t_0)/\min_{k}|I_k|$.

We first relax the problem by making the following transformations such that the resulted problem can be optimally solved.
\begin{itemize}
\item The traffic load of flow $j_i$ is given by its density $D_i$. Flows can be routed simultaneously on any link;
\item Each flow can be routed through multiple paths;
\item The links in the network can be flexibly turned on and off at any moment.
\end{itemize}
We observe that the resulted problem can be decomposed into a set of subproblems in each interval $I_k$ as in each interval $I_k$, the traffic flows on the network are invariable. Actually, each such subproblem in an interval is a fractional multi-commodity flow (F-MCF) problem that is precisely defined as follows.
\begin{definition}[F-MCF]
For every active flow $j_i \in \mathcal{J}$ that satisfies $I_k \subseteq S_i$, a flow of traffic load $D_i$ has to be routed from $p_i$ to $q_i$. The objective is to route these flows on the network such that the total cost on the links is minimized, given cost function $f(\cdot)$ for every link.
\end{definition}

It is known that the F-MCF problem can be optimally solved by convex programming. Consequently, we obtain the fractional solution $y^*_{i,e}(k)$ which represents the proportion of the amount of the flow $j_i$ that goes through link $e$ in interval $I_k$. Absolutely, this solution is not feasible to the original DCFSR problem. Now we aim to transform this infeasible solution into a feasible one.

The transformation is accomplished by a randomized rounding process. Before that, we extract candidate routing paths for each flow following the Raghavan-Tompson \cite{Raghavan_Tompson} manner as follows. For each interval $I_k$ ($1\leq k \leq K$), we decompose the fractional solution $y^*_{i,e}(k)$ into weighted flow paths for each flow $j_i$ via the following procedure. We repeatedly extract paths connecting the source and destination of each flow $j_i$ from the subgraph defined by links $e$ for which $y^*_{i,e}(k) > 0$. For each extracted path $\mathcal{P}$, we assign a weight $w_\mathcal{P} = \min_{e \in \mathcal{P}} y^*_{i,e}(k)$ for $\mathcal{P}$ and the value of $y^*_{i,e}(k)$ on every link in $\mathcal{P}$ is reduced by $w_\mathcal{P}$. This path extracting process will terminate when every $y^*_{i,e}(k)$ becomes zero, which is guaranteed by the flow conservation constraint. As a result, we obtain a set $\mathcal{Q}_i(k)$ of paths for flow $j_i$ in interval $I_k$. We repeat this process for every interval and denote $\mathcal{Q}_i = \cup_{1\leq k\leq K} \mathcal{Q}_i(k)$ as the set of all the candidate paths for flow $j_i$ without duplication. Note that a path $\mathcal{P}$ may be used in more than one interval. We denote by $w_\mathcal{P}(k)$ the corresponding weights of $\mathcal{P}$ in different intervals. If $\mathcal{P}$ is not used in interval $I_k$, then $w_\mathcal{P}(k) = 0$.

Now we show how to choose a single path for each flow $j_i$ from the candidate paths $\mathcal{Q}_i$. For each path $\mathcal{P} \in \mathcal{Q}_i$, we assign a new weight $\bar{w}_\mathcal{P}$ where $\bar{w}_\mathcal{P} = \sum_k w_\mathcal{P}(k) \cdot |I_k|/(d_i-r_i)$. The routing path $\mathcal{P}_i$ for flow $j_i$ is then determined by randomly choosing a path $\mathcal{P}$ from $\mathcal{Q}_i$ using weight $\bar{w}_\mathcal{P}$ as the probability at which path $\mathcal{P}$ will be chosen. This path choosing process will be repeated for every flow. Consequently, a single path $\mathcal{P}_i$ will be determined for each flow $j_i \in \mathcal{J}$ and the packets from this flow will be routed through only this path.

Finally, we choose a transmission rate for each flow in every interval $I_k$. Denoting also $\mathcal{J}_e(k)$ the flow that will be transmitted on link $e$ in interval $I_k$, the transmission rate for every flow $j_i \in \mathcal{J}_e(k)$ will be set to $\sum_{j_i \in \mathcal{J}_e(k)} D_i$ and data packets from each flow in $\mathcal{J}_e(k)$ will be forwarded on $e$ using the EDF policy which we have introduced before.

\begin{algorithm}[!t]
\caption{\label{alg:approx} \textbf{Random-Schedule}}
\textbf{Input:} data center network $\mathcal{G}=(\mathcal{V},\mathcal{E})$, set of flows $\mathcal{J}$\\
\textbf{Output:} Routing path $\mathcal{P}_i$ for flows $j_i \in \mathcal{J}$ and transmission rate $s_i(t)$ for $t \in [r_i,d_i]$

\begin{algorithmic}[1]
\STATE Transform the DCFSR problem into a multi-step fractional multi-commodity flow problem by relaxing the constraints
\FOR{$I_k \in \{I_1,\ldots,I_K\}$}
	\STATE Solve the fractional multi-commodity flow problems by convex programming, obtaining $y^*_{i,e}(k)$
	\STATE Extract candidate paths for each flow, denote as $\mathcal{Q}_i(k)$ and a weight $w_\mathcal{P}(k)$ for each $\mathcal{P} \in \mathcal{Q}_i(k)$
\ENDFOR
\STATE $\mathcal{Q}_i = \cup_{1\leq k\leq K} \mathcal{Q}_i(k)$ for $j_i \in \mathcal{J}$
\STATE $\bar{w}_\mathcal{P} = \sum_k w_\mathcal{P}(k) \cdot |I_k|/(d_i-r_i)$ for  $\mathcal{P} \in \mathcal{Q}_i$
\FOR{$j_i \in \mathcal{J}$}
\STATE Randomly choose a path $\mathcal{P}_i$ from $\mathcal{Q}_i$ using weight $\bar{w}_\mathcal{P}$ as the probability
\ENDFOR
\STATE Route the packets from all the flows on link $e$ in interval $I_k$ using the EDF policy. The transmission rate for flow $j_i$ on link $e$ in interval $I_k$ is $\sum_{j_i \in \mathcal{J}_e(k)} D_i$.
\end{algorithmic}
\end{algorithm}

The whole process of the algorithm is shown in Algorithm~\ref{alg:approx}. We have to mention that the proposed algorithm does not guarantee the maximum operation rate constraint. However, we observe that the probability that many flows are simultaneously requested to be forwarded on a designated link in the proposed algorithm is very low as the probability for choosing a link for a flow is derived from the fractional solution which has the maximum operation rate constraint considered. Nevertheless, we can always repeat the randomized rounding process until we obtain a feasible solution. Now we show that 
\begin{theorem}
The deadline of every flow $j_i \in \mathcal{J}$ can be met in the solution produced by algorithm \textbf{Random-Schedule}. 
\end{theorem}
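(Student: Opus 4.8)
The plan is to verify the first constraint of $(P_2)$ directly: to show that the schedule output by \textbf{Random-Schedule} transmits each flow $j_i$ at its own density $D_i$ throughout its span $S_i=[r_i,d_i]$, so that exactly $w_i$ units are delivered by time $d_i$. First I would record the key structural fact about the breakpoints: because $\mathcal{T}$ is defined as the set of \emph{all} release times and deadlines, every $r_i$ and $d_i$ lies in $\mathcal{T}$, so the span $S_i$ is tiled exactly by the consecutive subfamily $\{I_k : I_k\subseteq S_i\}$ of the elementary intervals. This gives the identity $\sum_{k\,:\,I_k\subseteq S_i}|I_k| = d_i-r_i$, which will drive the final summation.

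Next I would show that on every link $e\in\mathcal{P}_i$ and in every interval $I_k\subseteq S_i$ the flow $j_i$ is served at rate exactly $D_i$. The algorithm sets the operation rate of $e$ during $I_k$ to $\sum_{j_i\in\mathcal{J}_e(k)}D_i$, i.e.\ to the sum of the densities of precisely the flows routed through $e$ while active in $I_k$. Hence $e$ is critically loaded: the ``fluid'' assignment that grants each competing flow its own density is a feasible deadline schedule on $e$. Since EDF is an optimal single-link deadline scheduler---it meets every deadline whenever some feasible schedule exists---the EDF policy used on $e$ also meets all these deadlines, so $j_i$ receives its density $D_i$ on $e$. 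Integrating over the span then yields a total transmitted amount of $\sum_{k\,:\,I_k\subseteq S_i}D_i\,|I_k| = D_i\,(d_i-r_i)=w_i$, which establishes the claim.

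I expect the main obstacle to be the multi-hop consistency of this per-link argument, since a flow must be forwarded over \emph{every} link of $\mathcal{P}_i$, and each of those links carries a different competing set $\mathcal{J}_e(k)$ with a different aggregate rate; one must argue that the EDF allocations along the path fit together so that no downstream link is starved by a slower upstream one. The way I would dispose of this is to invoke flow conservation of the fractional solution $y^*_{i,e}(k)$: it routes exactly density $D_i$ of each flow and balances in- and out-flow at every node, so after rounding each link of $\mathcal{P}_i$ is loaded to exactly the sum of the densities it carries and every link on the path serves $j_i$ at the \emph{same} per-flow rate $D_i$. This makes the pipeline self-consistent and lets $j_i$ be treated, link by link, as a density-$D_i$ stream. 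I would also note explicitly that this rate assignment disregards the capacity bound $C$, which is consistent with the algorithm's stated relaxation of the maximum-operation-rate constraint.
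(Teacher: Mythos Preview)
Your proposal is correct and follows essentially the same per-link, per-interval workload argument as the paper. The paper's version is even more direct: since the link rate on $e$ during $I_k$ is $\sum_{j_i\in\mathcal{J}_e(k)}D_i$ and the total workload there is $\sum_{j_i\in\mathcal{J}_e(k)}D_i\cdot|I_k|$, the data fits exactly in time $|I_k|$ under \emph{any} scheduling policy, so the appeal to EDF optimality is unnecessary; your multi-hop consistency discussion is extra care that the paper simply omits.
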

\begin{proof}
As we allow preemption for each flow, it suffices to show that all the data that arrives on a link $e$ in every interval $I_k \subseteq S_j$ can be transmitted by the end of this interval. Let us focus on one arbitrary link $e$ and an arbitrary interval $I_k$. The total amount of data that has to be transmitted through $e$ in this interval is given by $\sum_{j_i \in \mathcal{J}_e(k)} \left(D_i \cdot |I_k|\right)$. As the transmission rate for every flow in $\mathcal{J}_e(k)$ is $\sum_{j_i \in \mathcal{J}_e(k)} D_i$, the total time that is needed for accomplishing all the flows is equal to $\sum_{j_i \in \mathcal{J}_e(k)} \left(D_i \cdot |I_k|\right)/\sum_{j_i \in \mathcal{J}_e(k)} D_i = |I_k|$. As a result, all the data in this interval can always be transmitted no matter what kind of scheduling policy is used. However, we use the EDF policy because it can significantly reduce the frequency of changing the transmission rates of links.
\end{proof}

\subsection{Performance Analysis}

We now analyze the approximation performance of the proposed algorithm. Our results are based on the main results in \cite{Andrews_Fernandez-SS-2010} where the authors also used a rounding process to approximate the multi-commodity flow problem with $f(\cdot)$. The biggest difference compared with that work is that the rounding process we propose in this paper is responsible for minimizing the number of used links and thus has to guarantee the same path for each flow in every interval. That is, we aim at solving a multi-step MCF problem. We base our proof on the following result and we only show the difference from it.

\begin{theorem}[\cite{Andrews_Fernandez-SS-2010}]
\label{thm:mcf}
For nonuniform demands, randomized rounding can be used to achieve a $O\left(K+\log^{\alpha-1}D\right)-approximation$ for MCF with cost function $f(\cdot)$, where $K$ is the total number of demands and $D$ is the maximum demand among all the demands.
\end{theorem}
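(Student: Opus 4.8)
The plan is to follow the standard relaxation-and-rounding template, where the single genuinely technical point is controlling the superadditive $\mu x^{\alpha}$ term under independent rounding. First I would solve the convex relaxation of MCF (each demand splittable across paths) to obtain a fractional routing $y^*_{i,e}$ whose cost $C^*$ is a valid lower bound on the optimal integral cost $OPT$. I would then decompose each demand's fractional flow into source--destination paths with weights summing to one, and round \emph{independently} across demands, selecting one path per demand with probability equal to its weight; letting $Y_{i,e}\in\{0,1\}$ indicate that flow $j_i$ traverses $e$, we have $\mathbb{E}[Y_{i,e}]=\sum_{\mathcal{P}\ni e} w_{\mathcal{P}}=y^*_{i,e}$, so the rounded load is $x_e=\sum_i D_i Y_{i,e}$ with $\mathbb{E}[x_e]=\hat{x}_e:=\sum_i D_i y^*_{i,e}$. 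Because $f$ splits additively into its idle part and its dynamic part, I would analyze the expected cost as $\sigma\,\mathbb{E}[|\mathcal{E}_a|]+\mu\sum_e\mathbb{E}[x_e^{\alpha}]$ and charge each term separately against the corresponding part of $C^*$.

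The heart of the argument, and the step I expect to be the main obstacle, is bounding $\mathbb{E}[x_e^{\alpha}]$. Here $x_e$ is a sum of independent variables $D_i Y_{i,e}$, each lying in $[0,D]$, with mean $\hat{x}_e$. I would invoke a Chernoff/Bernstein-type tail bound for such weighted sums: for $t$ above a threshold of order $\max(\hat{x}_e,\,D\log D)$ the probability $\Pr[x_e>t]$ decays exponentially, and integrating $\alpha t^{\alpha-1}\Pr[x_e>t]$ over $t$ yields $\mathbb{E}[x_e^{\alpha}]=O\!\left(\log^{\alpha-1}D\right)\cdot\max(\hat{x}_e^{\alpha},\,D^{\alpha-1}\hat{x}_e)$. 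Since the relaxation already pays $\mu\hat{x}_e^{\alpha}$, and since any edge carrying flow $j_i$ forces a dynamic cost at least $\mu D_i^{\alpha}$ so that $C^*$ dominates the second argument as well, this charges the rounded dynamic cost to $C^*$ at an $O(\log^{\alpha-1}D)$ multiplicative loss. The delicate point is keeping the moment integral uniform across edges so the estimate survives summation over $\mathcal{E}_a$; this is precisely where the exponent $\alpha$ forces a $\log^{\alpha-1}D$ rather than a $\log D$ factor.

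For the idle term I would bound the expected number of active links. Each of the $K$ flows activates at most the edges on its single chosen path, so $\mathbb{E}[|\mathcal{E}_a|]\le\sum_i\mathbb{E}[\,|\mathcal{P}_i|\,]$, and each chosen path length is comparable to the fractional support that the relaxation already charges $\sigma$ for; charging the activated edges flow-by-flow then costs an $O(K)$ factor against the fractional idle cost. Combining the two estimates by linearity of expectation, the rounded solution has expected cost $O\!\left(K+\log^{\alpha-1}D\right)\cdot C^*\le O\!\left(K+\log^{\alpha-1}D\right)\cdot OPT$. Finally a routine Markov/union argument---repeating the randomized rounding a polynomial number of times and retaining the best feasible outcome---converts the expectation guarantee into a high-probability approximation, which establishes the claimed ratio.
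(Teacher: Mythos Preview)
The paper does not actually prove this theorem: it is quoted verbatim as a known result from \cite{Andrews_Fernandez-SS-2010} and is then used as a black box in the proofs of Theorems~6 and~7. So there is no ``paper's own proof'' to compare your proposal against; the authors explicitly write that they ``base [their] proof on the following result and \dots\ only show the difference from it.''

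That said, your sketch is structurally the right one and mirrors the Andrews--Fern\'andez~Anta--Zhang--Zhao argument: relax to a splittable convex MCF, decompose into paths, round independently, and separately charge the $\sigma$-term and the $\mu x^{\alpha}$-term. Two places deserve more care if you intend this as a self-contained proof. First, your charging of the dynamic term relies on ``any edge carrying flow $j_i$ forces a dynamic cost at least $\mu D_i^{\alpha}$,'' but in the \emph{fractional} lower bound $C^*$ an edge carries only $D_i y^*_{i,e}$, not $D_i$; the comparison between $\mathbb{E}[x_e^{\alpha}]$ and $C^*$ therefore needs the auxiliary function $h(x)=\max\{\mu x,\mu x^{\alpha}\}$ and a case split on whether $\hat{x}_e\lessgtr 1$, which is exactly how the cited paper and this paper's Theorem~6 proceed. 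Second, your derivation of the $O(K)$ factor from the idle term (``charging the activated edges flow-by-flow'') is not where the $K$ actually enters in the original analysis; there the $K$ arises from the power-mean inequality $(\sum_i x_i)^{\alpha}\le K^{\alpha-1}\sum_i x_i^{\alpha}$ when relating $h(\hat{x}_e)$ back to $\sum_i h(D_i y^*_{i,e})$ in the high-load case. Your idle-cost argument as written would only yield that $\mathbb{E}[\hat{z}_e]\le\sum_i y^*_{i,e}$, which is not obviously within an $O(K)$ factor of the fractional opening cost without further work.
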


\begin{theorem}
 Algorithm~\textbf{Random-Schedule} can achieve a $O\left(\lambda^{\alpha}(n^2 \log D)^{\alpha-1}\right)$-approximation for the DCFSR problem with power function $g(x) = \mu x^{\alpha}$, where $D = \max_{j_i \in \mathcal{J}} D_i$.
\end{theorem}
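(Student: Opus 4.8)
The plan is to treat \textbf{Random-Schedule} as Raghavan--Thompson / Andrews--Fern\'andez rounding applied to the relaxed \emph{multi-step} F-MCF instance, and to charge the expected cost of the rounded integral solution against the fractional optimum. First I would fix the lower bound: since the three relaxations (density load, multi-path, free power-down) only enlarge the feasible set, and each per-interval subproblem is an F-MCF solved to optimality by convex programming, the value $C^{*}=\mu\sum_{k}|I_k|\sum_{e}(\sum_{i}D_i\,y^{*}_{i,e}(k))^{\alpha}$ is a valid lower bound on $\Phi_{\mathrm{opt}}$ for the objective $g(x)=\mu x^{\alpha}$. It then suffices to bound $\mathbb{E}[\Phi]=\mu\sum_{k}|I_k|\sum_{e}\mathbb{E}[L_e(k)^{\alpha}]$, where $L_e(k)=\sum_{j_i\in\mathcal{J}_e(k)}D_i$ and the indicators $X_{i,e}=\mathbf{1}[e\in\mathcal{P}_i]$ are independent across flows with $\Pr[X_{i,e}=1]=\sum_{k'}y^{*}_{i,e}(k')\,|I_{k'}|/(d_i-r_i)$, directly from the definition of the aggregated weights $\bar{w}_{\mathcal{P}}$.

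The essential new feature, and the step I expect to be the main obstacle, is the single-path coupling across intervals: the rounded path $\mathcal{P}_i$ carries flow $j_i$ in \emph{every} active interval, so $L_e(k)$ may load links on which the fractional solution routed $j_i$ elsewhere, and $\mathbb{E}[L_e(k)]$ depends on the fractional usage in all other intervals $k'$. I would control this through the weight-aggregation identity $\sum_{k:\,I_k\subseteq S_i}|I_k|/(d_i-r_i)=1$, which shows the time-integrated expected load is conserved: $\sum_{k}|I_k|\,\mathbb{E}[L_e(k)]=\sum_{k'}|I_{k'}|\sum_{i}D_i\,y^{*}_{i,e}(k')$. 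Because every interval has length at least $\min_k|I_k|=(t_K-t_0)/\lambda$, each summand gives $\mathbb{E}[L_e(k)]\le\lambda\,\bar{\ell}_e$, where $\bar{\ell}_e$ is the time-averaged fractional load on $e$; Jensen applied to $x^{\alpha}$ with the probabilities $|I_{k'}|/(t_K-t_0)$ then yields $\mu\sum_{k}|I_k|(\mathbb{E}[L_e(k)])^{\alpha}\le\lambda^{\alpha}\,C^{*}_e$. This isolates the peak-to-average penalty of committing to one path and is exactly where the $\lambda^{\alpha}$ factor enters.

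What remains is the gap between $\mathbb{E}[L_e(k)^{\alpha}]$ and $(\mathbb{E}[L_e(k)])^{\alpha}$, which I would close by quoting the moment/concentration estimate underlying Theorem~\ref{thm:mcf}: for an independent weighted Bernoulli sum with maximum weight $D=\max_i D_i$, the $\alpha$-th moment exceeds the $\alpha$-th power of the mean by at most an $O(\log^{\alpha-1}D)$ factor together with the demand-count penalty of Andrews--Fern\'andez. Here the bookkeeping must absorb two counting effects specific to the multi-step instance: at most $n$ flows are active as demands in any single interval, and since $\mathcal{T}$ has at most $2n$ breakpoints there are only $K=O(n)$ intervals over which the rounding penalty accumulates. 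Carrying the per-MCF guarantee of Theorem~\ref{thm:mcf} through all $O(n)$ intervals, each coupling up to $n$ demands, is what produces the $n^{2}$ inside the $(\,\cdot\,)^{\alpha-1}$, so the deviation part contributes an $O\big((n^{2}\log D)^{\alpha-1}\big)$ factor.

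Finally I would combine the two pieces: multiplying the $\lambda^{\alpha}$ peak-to-average bound from the mean part by the $O\big((n^{2}\log D)^{\alpha-1}\big)$ concentration/counting factor and summing over links gives $\mathbb{E}[\Phi]\le O\big(\lambda^{\alpha}(n^{2}\log D)^{\alpha-1}\big)\,C^{*}\le O\big(\lambda^{\alpha}(n^{2}\log D)^{\alpha-1}\big)\,\Phi_{\mathrm{opt}}$, as claimed. The delicate point throughout is to keep the $\lambda$-redistribution argument and the Andrews--Fern\'andez concentration argument separated, so that the peak-to-average loss and the moment loss are not double-counted; the conservation identity above is precisely what makes that clean separation possible.
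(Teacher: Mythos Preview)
Your proposal uses the same ingredients as the paper's proof: the multi-step F-MCF relaxation as a lower bound, the $\lambda$-factor coming from the cross-interval coupling $\mathbb{E}[L_e(k)]\le\lambda\sum_{k'}\beta_{k'}x^*_e(k')$, the Andrews--Fern\'andez moment machinery of Theorem~\ref{thm:mcf}, the accounting $K\le 2n$ together with the power inequality $(\sum x_i)^{\alpha}\le m^{\alpha-1}\sum x_i^{\alpha}$ to obtain the $n^{2(\alpha-1)}$ factor, and the $\log^{\alpha-1}D$ extension to nonuniform demands. In that sense your plan and the paper's argument are the same proof.

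Where you diverge is purely organizational: the paper introduces the auxiliary function $h(x)=\max\{\mu x,\mu x^{\alpha}\}$ and splits into the two cases $\sum_k\beta_k x^*_e(k)\le 1$ and $>1$, whereas you propose to separate a ``mean part'' (Jensen giving $\mu\sum_k|I_k|(\mathbb{E}[L_e(k)])^{\alpha}\le\lambda^{\alpha}C^*_e$) from a ``deviation part'' ($\mathbb{E}[L_e(k)^{\alpha}]$ versus $(\mathbb{E}[L_e(k)])^{\alpha}$). Be aware that this clean separation does not quite work as you state it: the ratio $\mathbb{E}[L_e(k)^{\alpha}]/(\mathbb{E}[L_e(k)])^{\alpha}$ is \emph{not} uniformly bounded --- for a link with small expected load $p$ and unit demands it behaves like $p^{1-\alpha}\to\infty$. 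This is exactly why the paper (and Andrews--Fern\'andez) compare against $h(\mathbb{E}[L_e(k)])$ rather than $g(\mathbb{E}[L_e(k)])$ and handle the small-mean regime against the \emph{linear} lower bound. Your Step~4 therefore cannot literally be ``multiply the $\lambda^{\alpha}$ bound by the concentration factor''; you must carry the two-regime structure through, at which point your argument collapses back into the paper's Case~1/Case~2 analysis. The remaining bookkeeping --- the two applications of $(\sum x_i)^{\alpha}\le m^{\alpha-1}\sum x_i^{\alpha}$ over the $K\le 2n$ intervals and over the $\le n$ demands --- is what explicitly delivers the $n^{2(\alpha-1)}$ in the paper, and your sketch of ``$O(n)$ intervals each coupling up to $n$ demands'' is pointing at the same two factors.
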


\begin{proof}
We follow the process of the proof for Theorem~\ref{thm:mcf}. First, we assume unit flows, i.e., $w_i/(d_i-r_i) = 1$ and we use power consumption function $h(x) = \max\{\mu x, \mu x^{\alpha}\}$. Note here that we have $E(\hat{x}_e(\l)) \leq \lambda \sum_{k} \beta_k x^*_{e}(k)$ for any $1 \leq \l \leq K$. We then consider the following two cases:\\
Case $1$: $\sum_{1 \leq k\leq K} \beta_k x^*_{e}(k) \leq 1$, we have,
\begin{eqnarray}
&&E(\sum_{1\leq \l \leq K}g(\hat{x}_e(\l)) |I_{\l}|) \leq  \sum_{1\leq \l \leq K}\gamma_1 h(E(\hat{x}_e(\l)))|I_{\l}| \nonumber\\
							& \leq & \gamma_1 \lambda (t_K-t_0) \sum_{1 \leq k\leq K} \beta_k \mu x^*_{e}(k) \nonumber\\
							& = & \gamma_1 \lambda \sum_{1\leq k \leq K} h(x^*_e(k))|I_k| \\
							& = & \gamma_1 \lambda \sum_{1\leq k \leq K} \sum_{j_i \in J}h(x^*_{i,e}(k)) |I_k|, \nonumber
\end{eqnarray}
where the first inequality follows from the result in \cite{Andrews_Fernandez-SS-2010}.\\
Case $2$: $\sum_{1 \leq k\leq K} \beta_k x^*_{e}(k) > 1$, we have,
\begin{eqnarray}
&&E(\sum_{1\leq \l \leq K}g(\hat{x}_e(\l)) |I_{\l}|) \leq \sum_{1\leq \l \leq K}\gamma_2 h(E(\hat{x}_e(\l)))|I_{\l}| \nonumber\\
			&\leq& \gamma_2 \lambda^{\alpha} K^{\alpha-1} (t_K-t_0) \sum_{1 \leq k\leq K} \beta_k^{\alpha} \mu (x^*_{e}(k))^{\alpha} \nonumber\\
			&\leq& \gamma_2 \lambda^{\alpha} K \sum_{1\leq k \leq K} h(x^*_e(k))|I_k|  \\
			& \leq& 2\gamma_2 \lambda^{\alpha} n^{2(\alpha-1)} \sum_{1\leq k \leq K} \sum_{j_i \in J} h(x^*_{i,e}(k))|I_k|, \nonumber
\end{eqnarray}
where the first inequality follows also from the result in \cite{Andrews_Fernandez-SS-2010}. The second inequality and the last inequality follow from the property that $(x_1+\ldots+x_m)^{\alpha} \leq m^{\alpha-1}(x_1^{\alpha} + \ldots + x_m^{\alpha})$ while in the last inequality, we also apply $K \leq 2n$. The third inequality follows due to $\beta_k \leq 1$ for $1\leq k \leq K$. 

It can be observed that when the power consumption function is given by $g(x)=\mu x^{\alpha}$, $\sum_{j_i \in \mathcal{J}} h(x^*_{i,e}(k))|I_k|$ is a lower bound for the optimal energy consumption as it is the total energy consumed when the smallest transmission rate for each flow is used and also each flow can be routed through multiple paths. Consequently, we have
\begin{equation}
E(\sum_{1\leq \l \leq K}f(\hat{x}_e(\l)) |I_{\l}|)\leq O\left(\lambda^{\alpha} n^{2(n-1)}\right)\cdot \Phi_{opt}, 
\end{equation}
where the expression on the left side is the expectation of the energy consumption in the solution produced by Algorithm~\textbf{Random-Schedule}. Using Markov's inequality, the probability that the energy consumption is more than $c\cdot O(n^2)\cdot \Phi_{opt}$ is no more than $1/c$. This result then can be extended to nonuniform flows by introducing an extra factor $\log^{\alpha-1}D$, which has also been shown in \cite{Andrews_Fernandez-SS-2010}.
\end{proof}

\begin{theorem}
Algorithm~\textbf{Random-Schedule} can solve the DCFSR problem with the power function given in Eq.~\ref{eqn:power_func} while guaranteeing an approximation ratio $\gamma$ of $O\left(\lambda^{\alpha} (n^2\log D)^{\alpha-1}\right)$.
\end{theorem}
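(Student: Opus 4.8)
The plan is to read this statement as a lift of the preceding theorem from the purely dynamic cost $g(x)=\mu x^{\alpha}$ to the full power function $f$ of Eq.~\ref{eqn:power_func}, whose only extra ingredient is the idle term $\sigma$ charged once per active link. Crucially, the argument of the preceding proof was already conducted through the surrogate $h(x)=\max\{\mu x,\mu x^{\alpha}\}$ rather than through $g$ directly, so the remaining work is only to certify that bounding both the algorithm's cost and the optimum against $h$ is equivalent, up to a constant depending only on $\alpha$, to bounding them against $f$. I would therefore split the task into a per-link comparison of $f$ with $h$, followed by the aggregation over links and intervals that the previous case analysis already supplies.

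For the per-link comparison I would normalize all transmission rates by the efficient operating rate $R_{opt}=\left(\sigma/(\mu(\alpha-1))\right)^{1/\alpha}$ of Lemma~\ref{lm:opt_rate}, so that the kink of $h$ lands exactly at the minimizer of the power rate $f(x)/x$. In this normalization there are constants $c_1,c_2$ depending only on $\alpha$ with $c_1 h(x)\le f(x)\le c_2 h(x)$ for every rate at or above the efficient one: for $x\ge R_{opt}$ the dynamic term dominates and $f(x)=\sigma+\mu x^{\alpha}\le c_2\mu x^{\alpha}=c_2 h(x)$ with $c_2=\alpha$ sufficing at the kink, while the idle term on an efficiently-run link is charged against the linear branch $\mu x$ of $h$, which is precisely the quantity the surrogate was built to absorb (this is the $f$-versus-$h$ relationship of \cite{Andrews_Fernandez-SS-2010}). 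Consequently the quantity $\sum_{1\le k\le K}\sum_{j_i\in\mathcal{J}}h(x^*_{i,e}(k))|I_k|$ used to close the preceding proof is, within the factor $c_1$, also a lower bound on $\Phi_{opt}$ computed with the true function $f$.

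Next I would carry the idle term through the rounding. The same concentration statement $E(\hat{x}_e(\ell))\le\lambda\sum_k\beta_k x^*_e(k)$ that controlled the dynamic cost also controls the expected load, hence the expected number of active links and the total idle energy $\sigma\,|\mathcal{E}_a|\,(t_K-t_0)$; since the linear branch of $h$ is exactly this per-unit idle charge, Case~$1$ and Case~$2$ of the preceding proof already bound $E(\sum_{\ell}h(\hat{x}_e(\ell))|I_{\ell}|)$ against $\sum_k\sum_i h(x^*_{i,e}(k))|I_k|$ with the factor $O(\lambda^{\alpha}n^{2(\alpha-1)})$. Multiplying by $c_2/c_1$ from the per-link comparison and summing over links replaces $h$ by $f$ on both sides while changing the factor only by an $\alpha$-dependent constant. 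The nonuniform-demand reduction of \cite{Andrews_Fernandez-SS-2010} then contributes the extra $\log^{\alpha-1}D$, and since $n^{2(\alpha-1)}\log^{\alpha-1}D=(n^2\log D)^{\alpha-1}$, a final application of Markov's inequality turns the expectation bound into a high-probability guarantee with the claimed ratio $O\!\left(\lambda^{\alpha}(n^2\log D)^{\alpha-1}\right)$.

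The delicate point, which I would handle last, is the regime $x<R_{opt}$: there $f$ stays bounded below by $\sigma>0$ while $h$ vanishes, so the upper half $f\le c_2 h$ of the per-link comparison fails pointwise. This is exactly the physically relevant case $R_{opt}>C$ flagged before Lemma~\ref{lm:opt_rate} and in the preceding hardness theorem, where no link can ever reach the efficient rate. Here I would re-derive the comparison constants with $C$ substituted for $R_{opt}$: the power rate $f(x)/x$ is then minimized at the cap, so every active link pays at least $f(C)/C$ per unit of traffic, and because the rounding never activates a link left idle by the fractional optimum, the idle energy the algorithm spends on under-utilized links is charged against this same unavoidable per-unit cost already present in $\Phi_{opt}$. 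Verifying that this substitution only inflates the hidden constant, leaving the exponent $\alpha$ and the polynomial $n^2\log D$ untouched, is where most of the care will be required.
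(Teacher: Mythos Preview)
Your plan diverges from the paper's proof in how the idle term $\sigma$ is handled, and the divergence is where your gap lies. The paper does \emph{not} try to absorb the idle cost into a pointwise comparison $c_1 h\le f\le c_2 h$. Instead it introduces explicit $0$/$1$ indicators: $z^*_e(k)$ for whether link $e$ carries any traffic in the fractional solution during interval $I_k$, and $\hat z_e$ for whether the rounded solution ever uses $e$. It then bounds the expected idle energy directly,
\[
\sigma(t_K-t_0)\,E(\hat z_e)\;\le\;\sigma(t_K-t_0)\sum_k\sum_{j_i}y^*_{i,e}(k)\;\le\;n\lambda\sum_k z^*_e(k)\,\sigma\,|I_k|,
\]
and observes that the right-hand side is a lower bound on the optimal idle energy (because the relaxation is allowed to power links down between intervals while the true optimum is not). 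The resulting factor $n\lambda$ is then swallowed by the larger $O(\lambda^{\alpha}n^{2(\alpha-1)}\log^{\alpha-1}D)$ already established for the dynamic part. No pointwise $f\approx h$ inequality is invoked for the idle term at all.

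Your route, by contrast, needs $f(\hat x_e(\ell))\le c_2\,h(\hat x_e(\ell))$ on the \emph{algorithm} side, and you correctly note this fails whenever a link runs below $R_{opt}$ (indeed whenever $R_{opt}>C$, so on every link). Your proposed patch---``the rounding never activates a link left idle by the fractional optimum, so charge the idle energy against the unavoidable per-unit cost $f(C)/C$ in $\Phi_{opt}$''---does not close the gap: the fractional optimum may touch a link with arbitrarily small $y^*_{i,e}(k)$, paying only a tiny per-unit contribution, yet the rounded solution that happens to select that link pays the full $\sigma(t_K-t_0)$. A per-unit charging scheme cannot by itself control this ratio; what is needed is exactly the probability bound $E(\hat z_e)\le\sum_k\sum_i y^*_{i,e}(k)$ that the paper uses. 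Once you write that bound down, the surrogate $h$ is no longer doing any work for the idle part, and you have essentially rediscovered the paper's argument.
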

\begin{proof}
We also assume power consumption function $f(x)$ for the DCFSR problem and solve it with the proposed algorithm \textbf{Random-Schedule}. In the obtained fractional solution for the multi-step F-MCF problem, we use $z^*_{e}(k) \in \{0,1\}$ to indicate whether a link $e$ is chosen or not in interval $I_k$. We denote by $\hat{z}_e$ an indicator in the produced solution where $\hat{z}_e = 1$ if $e$ is used in at least one of the intervals; $\hat{z}_e = 0$ otherwise. Then, we have
\begin{eqnarray}
\sigma(t_K-t_0)E(\hat{z}_e) &=& \sigma(t_K-t_0) \left(1-\prod_{k}(1-E(\hat{z}_e(k)))\right) \nonumber\\
			&\leq & \sigma(t_K-t_0) \sum_{k} E(\hat{z}_e(k)) \nonumber \\
			&= & \sigma(t_K-t_0) \sum_{k} \left(1-\prod_{j_i \in J} (1-y^*_{i,e}(k))\right) \nonumber\\
			&\leq& \sigma(t_K-t_0) \sum_{k} \sum_{j_i \in J} y^*_{i,e}(k) \nonumber \\
			&\leq & \sigma(t_K-t_0) \sum_{k} nz^*_{e}(k)  \\
			&\leq & n\lambda\sum_{k} z^*_{e}(k)\sigma|I_k|, \nonumber
\end{eqnarray}
where the third inequality follows from $y^*_{i,e}(k) \leq z^*_e(k)$ and the last inequality follows from $\min_{k}|I_k| \leq |I_k|$ for any $1\leq k\leq K$. It can be observed that $\sum_{k} z^*_{e}(k)\sigma|I_k|$ is a lower bound for the optimal idle energy consumption by link $e$ since $z^*_{e}(k)$ is derived from the M-MCF problem which allows the links to be turned on and off freely at any moment. Combining all these together, we conclude that algorithm \textbf{Random-Schedule} can produce a $O\left(\lambda^{\alpha}(n^2\log D)^{\alpha-1}\right)$-approximation for the DCFSR problem.
\end{proof}

\subsection{Numerical Results}

We now briefly describe simulation results that illustrate the approximation performance of the proposed algorithm. We build a simulator with the \textbf{Random-Schedule} implemented in Python. We use the power consumption functions $x^{2}$ or $x^{4}$ and we choose a data center network topology which consists of $80$ switches (with $128$ servers connected). We consider [1, 100] as the time period of interest and as we assume no prior knowledge on the flows, we select release times and deadlines of flows randomly following a uniform distribution in [1,100].
The number of flows ranges from $40$ to $200$ and the amount of data from each flow is given by a random rational number following normal distribution $\mathcal{N}(10, 3)$. We compare three values of interest: lower bound (LB) for the optimum (solution given by $y^*_{i,v}(k)$), \textbf{Shortest-Path} (SP) routing plus \textbf{Most-Critical-First} (MCF), and \textbf{Random-Schedule} (RS). As SP is usually adopted, SP+MCF can give the lower bound of the energy consumption by SP routing, which represents the normal energy consumption in data centers. All of the values are normalized by the lower bound for the optimum and are averaged among $10$ independent runs. The simulation results are illustrated in Fig.~\ref{fig:sim}. As expected, RS outperforms SP+MCF to a large extent. Moreover, we notice that with the increase of the number of flows, the approximation ratio of RS converges while the approximation ratio of SP+MCF keeps an increasing trend. This confirms that combining routing and scheduling for flows can provide substantial improvements on the energy efficiency in data center networks.

This simulation serves merely as a primitive validation of the performance of the algorithm. Due to the space limit, we leave more exhaustive evaluation and further implementation as future work.

\begin{figure}[!t]
	\centering
	\subfigure{
 	\label{fig:sim-1} 
	\includegraphics[scale=0.67]{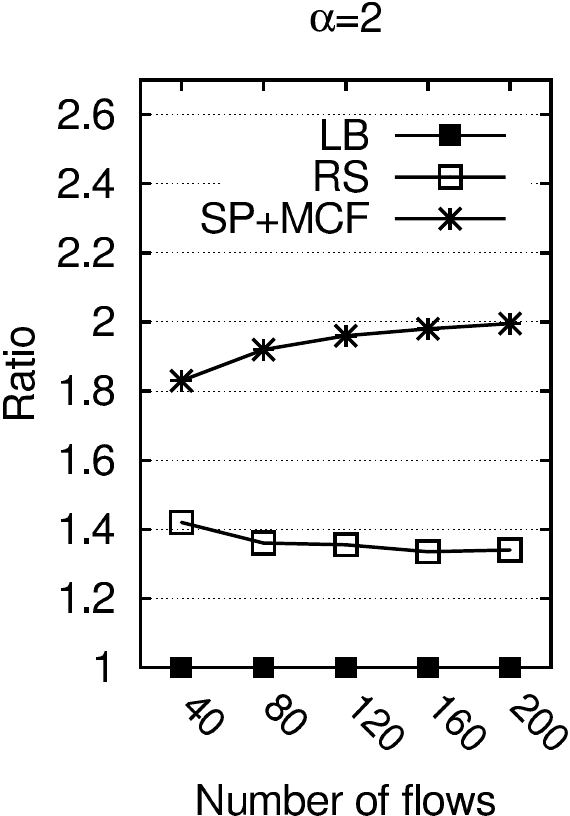}}
	\hspace{0in}
	\subfigure{
	\label{fig:sim-2} 
	\includegraphics[scale=0.67]{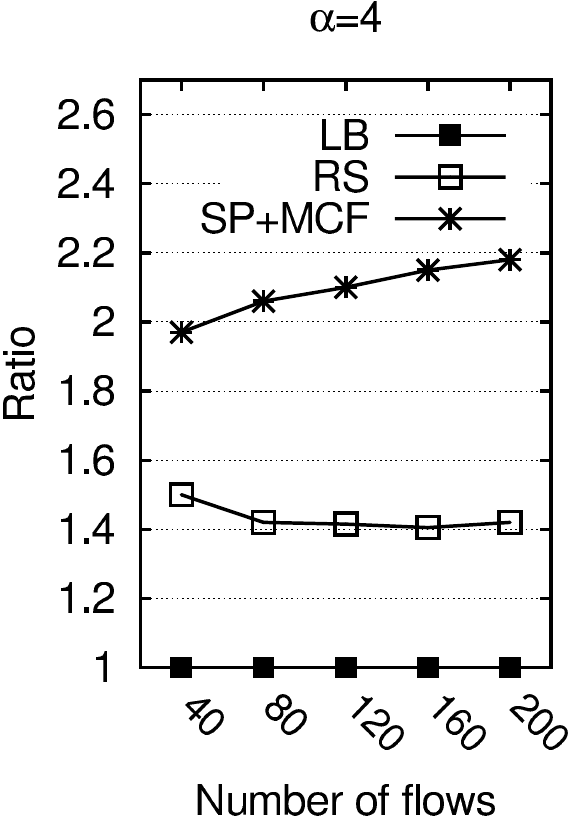}}
	\caption{The approximation performance of \textbf{Random-Schedule}.}
	\label{fig:sim} 
	\vspace{-0.45cm}
\end{figure}

\section{Related Work}
\label{sec:related}

This section summarizes some related work on the problem of improving the energy efficiency of DCNs, as well as the network scheduling and routing problem.

\subsection{Energy-Efficient Data Center Networks}

There has been a large body of work on improving the energy efficiency in DCNs. In general, they can be classified into two categories: The first line of work is designing new topologies that use fewer network devices while aiming to guarantee similar connectivity thus performance, such as the flatted butterfly proposed by Abts \emph{et al.} \cite{Abts_Marty-2010} or PCube \cite{Huang_Jia-2011}, a server-centric network topology for data centers, which can vary the bandwidth availability according to traffic demands. The second line of work is optimizing the energy efficiency by traffic engineering, i.e., consolidating flows and switching off unnecessary network devices. The most representative work in this category is ElasticTree \cite{Heller_Seetharaman}, which is a network-wide power manager that can dynamically adjust a set of active network elements to satisfy variable data center traffic loads. Shang \emph{et al.} \cite{Shang_Li} considered saving energy from a routing perspective, routing flows with as few network devices as possible. Mahadevan \emph{et al.} \cite{Mahadevan_banerjee-2011} discussed how to reduce the network operational power in large-scale systems and data centers. The first rate-adaptation based solution to achieve energy efficiency for future data center networks was provided by \cite{Wang_Zhang-2013}. Vasic \emph{et al.} \cite{Vasic_Bhurat-2011} developed a new energy saving scheme that is based on identifying and using energy-critical paths. Recently, Wang \emph{et al.} \cite{Wang_Yao} proposed CARPO, a correlation-aware power optimization algorithm that dynamically consolidates traffic flows onto a small set of links and switches and shuts down unused network devices. Zhang \emph{et al.} \cite{Zhang_Ansari-2012} proposed a hierarchical model to optimize the power in DCNs and proposed some simple heuristics for the model. In \cite{Wang_Zhang-2013-ICCCN}, the authors explored the problem of improving the network energy efficiency in MapReduce systems and afterwards in \cite{Wang_Zhang-PER-2013} they proposed to improve the network energy efficiency by joint optimizing virtual machine assignment and traffic engineering in data centers. Then, this method was extended to a general framework for achieving network energy efficiency in data centers  \cite{Wang_Zhang-JSAC-2013}. In \cite{Shang_li-2013}, the authors proposed to combine preemptive flow scheduling and energy-aware routing to achieve better energy efficiency. But performance guarantee was not considered. In a following work \cite{Xu_Shang-2013}, they considered greening data center networks by using a throughput-guaranteed power-aware routing scheme. To the best of our knowledge, the present paper is among the first to address the energy efficiency of DCNs from the aspect of scheduling and routing while guaranteeing the most critical performance criterion in DCNs - meeting flow deadlines. 

\subsection{Network Scheduling and Routing}

There have been a few works that have investigated the problem of job scheduling with deadlines in a multi-hop network. With known injection rate, a given deadline and fixed routes, the problem of online scheduling of sessions has been investigated by \cite{Andrews_Zhang-1999}, where they first gave a necessary condition for feasibility of sessions and gave an algorithm under which with high probability, most sessions are scheduled without violating the deadline when the necessary condition for feasibility is satisfied. In \cite{Mao_Koksal-2013}, the authors studied online packet scheduling with hard deadlines in general multihop communication networks. The algorithm they proposed gives the first provable competitive ratio, subject to hard deadline constraints for general network topologies. However, they didn't consider optimizing for energy efficiency.

Packet scheduling and routing for energy efficiency in general networks has also been well studied. In \cite{Andrews_Fernandez-SS-2010} and \cite{Andrews_Anta-pd}, the authors investigated to optimize the network energy efficiency from the aspect of routing and scheduling under continuous flow (transmission speed for each flow is given by a constant), by exploiting speed scaling and power-down strategy, respectively. Andrews \emph{et al.} \cite{Andrews_Antonakopoulos-2011} then proposed efficient packet scheduling algorithms for achieving energy efficiency while guaranteeing network stability. In \cite{Andrews_Zhang-2012}, they also provided efficient packet scheduling algorithms for optimizing the tradeoffs between delay, queue size and energy. 

Our approach has some fundamental differences with all the aforementioned solutions. Firstly, we combine speed scaling and power-down strategies for network devices in a unified model. Secondly, we carry out optimization from the granularity of flow instead of the granularity of packet and we aim at guaranteeing flow deadlines. Lastly, we investigate the problem of achieving energy efficiency by combining both flow scheduling and routing where we have to decide not only the routing path and schedule, but also the transmission rate for each flow.

\section{Conclusions}
\label{sec:conc}

In this paper, we studied flow scheduling and routing problem in data center networks where the deadlines of flows were strictly constrained and the objective was to minimize the energy consumption for transmitting all of the flows. The key observation in this work was that energy efficiency cannot be separately considered regardless of network performance being meeting flow deadlines the most critical requirement for it. We focused on two general versions of this problem with only scheduling and both routing and scheduling respectively. We introduced an optimal combinatorial algorithm for the version with only flow scheduling and devised an efficient approximation algorithm for the version with both routing and scheduling for flows, obtaining a provable performance ratio. With the proposed algorithms, we were able to achieve the aforementioned objective.

% conference papers do not normally have an appendix

% use section* for acknowledgement
\section*{Acknowledgment}

This research was partially supported by National Science Foundation of China (NSFC) Major International Collaboration Project 61020106002, NSFC \& Hong Kong RGC Joint Project 61161160566, NSFC Project for Innovation Groups 61221062, NSFC Project grant 61202059 and 61202210, the Comunidad de Madrid grant S2009TIC-1692, and Spanish MICINN/MINECO grant TEC2011-29688-C02-01.

%The authors would like to thank...

% trigger a \newpage just before the given reference
% number - used to balance the columns on the last page
% adjust value as needed - may need to be readjusted if
% the document is modified later
%\IEEEtriggeratref{8}
% The "triggered" command can be changed if desired:
%\IEEEtriggercmd{\enlargethispage{-5in}}

% references section

% can use a bibliography generated by BibTeX as a .bbl file
% BibTeX documentation can be easily obtained at:
% http://www.ctan.org/tex-archive/biblio/bibtex/contrib/doc/
% The IEEEtran BibTeX style support page is at:
% http://www.michaelshell.org/tex/ieeetran/bibtex/
%\bibliographystyle{IEEEtran}
% argument is your BibTeX string definitions and bibliography database(s)
%\bibliography{IEEEabrv,../bib/paper}
%
% <OR> manually copy in the resultant .bbl file
% set second argument of \begin to the number of references
% (used to reserve space for the reference number labels box)
%\begin{thebibliography}{1}
%
%\bibitem{IEEEhowto:kopka}
%H.~Kopka and P.~W. Daly, \emph{A Guide to \LaTeX}, 3rd~ed.\hskip 1em plus
%  0.5em minus 0.4em\relax Harlow, England: Addison-Wesley, 1999.
%
%\end{thebibliography}

\balance

% that's all folks
\end{document}